\newcommand\independent{\protect\mathpalette{\protect\independenT}{\perp}}
\def\independenT#1#2{\mathrel{\rlap{$#1#2$}\mkern2mu{#1#2}}}
\newcommand{\ie}{\emph{i.e.}} 
\newcommand{\eg}{\emph{e.g.}}
\begin{document}
	
	\title{Testing Granger Non-Causality in Panels with Cross-Sectional Dependencies}
	\author{Lenon Minorics, Caner Turkmen, David Kernert,\\ Patrick Bloebaum, Laurent Callot, Dominik Janzing \\ 
		{ \small  Amazon Research} \\
		{\small \{minorics, atturkm, davdkern, bloebp, lcallot, janzind\}@amazon.com }}
	
	\maketitle
	
	
	\newtheorem{thm}{Theorem}
	\newtheorem{prop}[thm]{Proposition}
	\begin{abstract}
		This paper proposes a new approach for testing Granger non-causality on panel data. Instead of aggregating panel member statistics, we aggregate their corresponding p-values and show that the resulting p-value approximately bounds the type I error by the chosen significance level even if the panel members are dependent. We compare our approach against the most widely used Granger causality algorithm on panel data and show that our approach yields lower FDR at the same power for large sample sizes and panels with cross-sectional dependencies. Finally, we examine COVID-19 data about confirmed cases and deaths measured in countries/regions worldwide and show that our approach is able to discover the true causal relation between confirmed cases and deaths while state-of-the-art approaches fail.
	\end{abstract}
	\section{INTRODUCTION}
	Within the last decade, there has been growing awareness that causal inference can improve scientific research in many disciplines as
	interpretability and robustness become increasingly important~\citep{doshi2017towards, roscher2020explainable, marcinkevivcs2020interpretability, moraffah2020causal}. Causality is a crucial factor for gaining insights into the decision process of algorithms, which has many use cases such as avoiding bias and discrimination~\citep{mehrabi2019survey}, improving user experience~\citep{zhou2007understanding} and gathering biological insights~\citep{angermueller2016deep}. 
	
	If the causal relation between variables is known, causality can be used to study the interaction between statistical units such as estimating the average effect of treatments \citep{Imbens2015, Holland86}, analyze their mediation \citep{Berzuini2012}, detect the root causes of anomalies \citep{Janzing2019} or quantifying the causal influence of variables in a system \citep{Janzing2013, Janzing2020}.
	However, knowing the causal relations between variables of interest {\em a priori} is important for such applications.
	Unfortunately, these causal relations can often only be recovered from observational data under strong assumptions, even if hidden common causes (\ie, confounders) do not exist~\citep[Proposition 4.1]{Peters2017}.
	This leads to a seemingly unsolvable problem if further information about the data generating process is unavailable.
	For i.i.d. data, one way to mitigate this ambiguity is to impose additional assumptions about the generative process, such as linear non-Gaussian noise (LiNGAM) \citep{shimizu2006} or non-linear additive noise \citep{Hoyer08}. 
	For time series data, however, consequential additional information about the data generating process is fortunately available due to the time order. 
	That is, the present cannot causally influence the past, and hence the causal order is known.
	In many cases, this information is sufficient to recover the causal direction among variables in the form of time series \citep[Section 10.3.3]{Peters2017}.
	
	A well-known approach to causal discovery in time series data was developed by Granger \citep{Granger69, Granger80, Granger2003}. 
	Granger-causality is based on a simple definition: time series $\{X_t\}$ is said to Granger-cause time series $\{Y_t\}$ if the past of $\{X_t\}$ improves the prediction of $\{Y_t\}$ given its own past. 
	The presence of Granger-causality implies conditional dependence between the ``present'' $Y_t$ and the past of $X_t$, given the past of $Y_t$. By assuming there are no hidden common causes, we can deduce from Reichenbach's principle of common causes \citep{Reichenbach99} and the known causal order, that $\{X_t\}$ causally influences $\{Y_t\}$. In this definition of Granger-causality, we restricted to the bi-variate case for simplicity, however note that Granger defined the concept for general multivariate settings.
	
	Although many works have considered new approaches in time series causal discovery \citep{Shajarisales2015, Peters2012, Hyvaerinen2010, Kawahara2011}, procedures based on Granger-causality are still state-of-the-art due to their simplicity and favorable applicability in practice \citep{Berzuini2012, Bressler2011}. That is, as stated in \cite{Bressler2011} Chapter 22 Section 2.3,  Granger Causality does not rely on specific assumptions about the data generating process (except the exclusion of instantaneous effects, if instantaneous effects exist, additional assumptions have to be made) and is therefore particularly convenient for empirical investigations. 
	
	Several researchers considered an extension of classical causal discovery in time series to panel data \citep{Konya2006, Dumitrescu2011, Holtz-Eakin88, Juodis2021}. Further, \cite{Arkhangelsky2021, Arkhangelsky2021second, Athey2019} studied the potential outcome of treatments in panel data regimes.
	See also \cite{Hsiao} for general analysis on panel data. \\
	In the panel data setting, multiple variables are observed for the same members of a so-called {\em panel} across time.
	Concretely, suppose we are given tuples of time series $(X_i, Y_i)$, $i = 1, \dots, N$ for some $N \in \mathbb{N}$ where we succinctly refer to $X_i := \{X_{i,t}\}_{t=1}^T$.
	Moreover, assume that only one direction of causal influence is possible.
	That is, there exists $H \subseteq \{1, \dots , N\} $ such that $X_i$ causally influences $Y_i$ for all $i \in H$ but there exist no $j \in \{ 1, \dots , N\}$ such that $Y_j$ causally influences $X_j$. 
	In this setting, we refer to the set of tuples $\{(X_i, Y_i), i = 1, \dots, N\}$ as $(X, Y)$ and we say that $X$ causally influences $Y$. For testing the opposite direction, we can simply interchange the role of $X$ and $Y$.
	A simple example could be whether there is a causal link between confirmed cases and deaths resulting from COVID-19 infection in different countries. 
	Here, we are interested in the general relation of the tuple (COVID-19 confirmed cases, COVID-19 deaths) across different members of the panel (\ie, countries), see Section \ref{sec: covid experiment}. 
	
	The paper is organized as follows. In Section \ref{sec:related work} we give a concise exploration of the different approaches to Granger causality on panel data. Section \ref{sec:testing granger non-causality} is dedicated to the problem set up and proposed method, where we also provide the main theorem of the paper. 
	In Section \ref{sec: experiments}, we present experimental results for synthetic datasets. Finally, in Section \ref{sec: covid experiment}, we compare the results of state-of-the-art causal discovery algorithms on panel data against our approach on COVID-19 data about confirmed cases and deaths. 	
	\section{RELATED WORK} \label{sec:related work}
	In this section, we explore the existing literature related to our work. 
	First, we summarize different existing approaches to Granger causality on panel data. 
	We then introduce the Dumitrescu-Hurlin test, the most popular existing Granger causality test for panel data.
	Finally, we summarize an existing approach to aggregating p-values for high-dimensional regression, which provides a core component our method builds upon.
	\subsection{Hypothesis Testing of Granger Causality on Panel Data}
	
	Existing literature distinguishes between four types of hypothesis tests for Granger-causality on panel data, {\em cf.} \cite{Dumitrescu2011}. 
	Below, we formulate these tests for the causal influence from $X$ to $Y$ in terms of their null hypotheses.
	\\
	
	\textbf{Homogeneous Non-Causality (HNC)}: The null hypothesis of HNC test is that there is no causal relation between the variables for any individual, \ie, for all $i$ it holds that $X_i$ does not Granger-cause $Y_i$.
	\\
	\textbf{Homogeneous Causality (HC)}: Under the null, $X_i$ causes $Y_i$ for all $i$. 
	Further, it is assumed that the dynamics of $X_i$ and $Y_i$ do not change for different $i$. 
	In particular, it is assumed that the regression parameters from $X_i$ to $Y_i$ given the past of $Y_i$ are identical for all individuals. 
	\\
	\textbf{Heterogeneous Causality (HEC)}: The null is the same as for HC except that the test does not assume that the dynamics of $X_i$ and $Y_i$ remain the same across the panel. 
	\\
	\textbf{Heterogeneous Non-Causality (HENC)}: The null hypothesis is that there exists a subgroup of individuals for which $X_i$ does Granger-cause $Y_i$ and, hence, there exists at least 1 and at most $N-1$ individuals for which $X_i$ does not Granger cause $Y_i$.
	\\
	All approaches described above are based on the following assumption of the dynamics of $X$ and $Y$:
	\begin{align}
		\begin{split} \label{eq::generation process}
			X_{i,t}& =  \delta_{i, 0} +  \sum_{p=1}^P \delta_{i, p} X_{i, t-p} + \eta_{i, t}, \\
			Y_{i,t} &= \theta_{i, 0} + \sum_{p=1}^P \theta_{i, p} Y_{i, t-p} + \sum_{p=1}^P \beta_{i, p} X_{i, t-p} + \epsilon_{i, t}, 
		\end{split}
	\end{align} 
	where $P$ is the time lag order, and $\delta,$ $\eta$ and $\beta$ are coefficients. The concrete assumptions about the coefficient vectors $\delta,$ $\eta$ and $\beta$ depend on the considered hypotheses HNC, HC, HEC or HENC, but in all approaches it is assumed that the innovation processes $(\{\eta_{i, t}\}, \{\epsilon_{i, t}\})_{i, t}$ are mutually independent.
	\subsection{Dumitrescu--Hurlin Test} \label{sec: DH-test}
	The most widely used test of Granger-causality in panel data is the DH test, developed by \cite{Dumitrescu2011}. 
	The test considers the HNC null hypothesis, where no Granger-causal relationships are assumed to exist for any member $i$ of the panel.
	The DH test is based on an aggregated Wald statistic of individual Granger causality tests.
	
	There exist asymptotic and semi-asymptotic characterizations of the DH test. More precisely, as stated in the introduction, we denote by $(X,Y)$ the panels under consideration. The DH test considers the generation process \eqref{eq::generation process}  where it is allowed that coefficients vary across individuals, while being time invariant.
	Since the DH test does not assume homogeneous coefficients across individuals, the null and alternative hypotheses read as follows:
	\begin{align}
		\begin{split} \label{DH null}
			H_0&: (\beta_{i,1}, \dots , \beta_{i, p} )  = (0, \dots  ,0) ~~~  \text{for all } i \\
			H_1&: \exists ~ i ~\text{ s.t. } (\beta_{i,1}, \dots , \beta_{i, p} )  \neq (0, \dots  ,0).
		\end{split}
	\end{align}
	To test this hypothesis, \cite{Dumitrescu2011} first construct a Wald statistic $W_{i, T}$ for each individual. The \textit{average Wald statistic $W^\textit{Hnc}_{N,T}$} is then given by
	\begin{align*}
		W^\textit{Hnc}_{N,T} = \frac{1}{N} \sum_{i=1}^N W_{i, T}.
	\end{align*}
	Under suitable conditions and, in particular, using the independence between individuals (since $(\{\eta_{i, t}\}_{i, t}, \{\epsilon_{i, t}\}_{i, t})$ are mutually independent ), \cite{Dumitrescu2011} Theorem 2 shows that the normalized average Wald statistic converges in distribution to the normal distribution with expectation 0 and variance 1. We include further details in Appendix \ref{sec: DH further explanation}.
	
	The proof relies on the central limit theorem and hence, the independence between innovation processes is required (see Assumption 2 of \cite{Dumitrescu2011}). 
	This implies that dependencies across panel members are assumed to not exist which is a rather restrictive assumption, since it prohibits any interaction among the individual panel members.
	In practice, however, it is often the case that such interactions exist.
	In reference to our example above, the confirmed COVID-19 cases of different countries may be causally linked, \eg, because infected people might travel to different countries and infect locals.
	
	In Section \ref{sec:testing granger non-causality}, we will take a different approach to develop a hypothesis test where interactions between the innovation processes are taken into account. This approach will rely on a p-value aggregation idea for high dimensional regression.
	Although that approach is not connected to panel data, the idea can be used in panel data settings to obtain a p-value that controls the type I error by the chosen significance level even if dependencies between individuals (\ie, between the innovation processes) exist. 
	
	Dumitrescu and Hurlin  also suggest an alternative approach in \cite{Dumitrescu2011} Section 6.2 that takes cross-sectional dependencies into account, which we call \textit{DH block bootstrap test} in the following. As stated by the authors, the suggested algorithm results in very high computational costs; therefore new panel non-causality tests should be developed to account for cross section dependencies. For completeness, we include an explanation of the procedure in Appendix \ref{sec: dh block bootstrap} and also compare our approach to the DH block bootstrap test in the experiments.

	\subsection{A P-Value Aggregation Method From High - Dimensional Regression} \label{sec::p-values for high-dimensional regression}
	In this section, we describe the key concepts that we use for our test procedure, which is based on the work of \cite{Meinhausen2009}, see also \cite{Dezeure2015}. As mentioned in the previous section, this concept is not connected to panel data. Instead, it considers multiple bootstrap runs in regression, generates p-values for every bootstrap sample and aggregates them such that the corresponding test statistic (of the considered test) bounds the type I error. 
	
	More precisely, \cite{Meinhausen2009} consider the following setup. Let $\boldsymbol Z$ be an $n$-dimensional response vector and $\boldsymbol W$ a $n \times k$ dimensional design matrix such that
	\begin{align*}
		\boldsymbol Z = \boldsymbol W \boldsymbol b + \boldsymbol \tau,
	\end{align*}  
	with $\boldsymbol \tau$ being an i.i.d.~$n$-dimensional random vector with $\tau_i \sim \mathcal  N(0, \sigma^2)$ for some $\sigma^2 >0$ and $ \boldsymbol b \in \mathbb{R}^k$. \cite{Meinhausen2009} consider the problem of finding all $j$ such that $b_j > 0 $. In terms of statistical significance, \cite{Meinhausen2009} aim to assign p-values for the null hypotheses
	\begin{align*}
		H_{0, j}: b_j = 0,
	\end{align*}
	where they observe $n$ samples $(z_i, w_i)$, $i = 1,\dots, n$. Further, the paper assumes a high dimensional setting, \ie, $k \gg n$, where statistical inference is challenging. In order to alleviate this issue, \cite{Wasserman2008} proposes to split the data into two parts. The first part is used for feature selection where important variables are kept with high probability. The second half of the data is used to assign p-values to the kept features by using classical least squares estimation (the p-values for the dropped features can, for instance, be set to 1). Under some weak conditions, this procedure results in an approximately correct p-value. 
	
	However, \cite{Meinhausen2009} argue that this procedure relies on an arbitrary split of the $n$ samples and hence, the results can vary significantly making the test hard to reproduce.  
	This problem can be solved by a multi-splitting approach in which the procedure of \cite{Wasserman2008} is repeated $m$ times. 
	To understand the key idea of the multi-split approach of  \cite{Meinhausen2009}, we ignore the fact that we want to find all regressors for which $b_j > 0$, but rather focus on a fixed $j$ for which we want to test $H_{0, j}$.	
	In the multi-splitting approach, $m$ p-values $P_j^{(1)}, \dots P_j^{(m)}$ are generated, where each $P^{(l)}_j$, $l=1, \dots , m$ corresponding to each of the $m$ splits, is an approximately correct p-value for the test $H_{0,j}$. 
	In the next step, these p-values are aggregated to a single, approximately correct, p-value $P_j$. In contrast to the single split method of \cite{Wasserman2008}, the generated p-value of \cite{Meinhausen2009} is stable and as a result, makes the experiment reproducible. To aggregate the $m$ p-values  $P_j^{(1)},\dots, P_j^{(m)}$, \cite{Meinhausen2009} proposes the following procedure: \\
	For arbitrary $\gamma \in (0,1)$, define 
	\begin{align}
		\begin{split}
			Q_j(\gamma) &:= \min \Big\{1, \\ &\text{emp. } \gamma \text{-quantile} (\{P_j^{(l)} / \gamma,\; l=1,\dots , m\}) \Big\}. \label{eq: quantile meinhausen}
		\end{split}
	\end{align}
	Then, $P_j := Q_j(\gamma)$ is an asymptotically correct p-value, \ie, using $Q_j(\gamma)$, the type one error of the test $H_{0,j}$ is approximately bounded by the chosen significance level $\alpha \in (0,1)$ for each $\gamma \in (0,1)$, see \cite{Meinhausen2009} Theorem 3.1. 
	\\
	
	Above, we focused on the case where we are interested in a single predictor for simplicity. However, this approach can easily be generalized to the case where we want to test all predictors.~Corresponding FWER/FDR control procedures are also explained in \cite{Meinhausen2009}.
	
	\section{TESTING FOR GRANGER NON-CAUSALITY ON PANEL DATA} \label{sec:testing granger non-causality}
	\subsection{Problem Setup}
	In this paper, we consider a test for the HNC hypothesis, introduced in the previous section. 
	That is, we assume that under the null $X_i$ does not Granger-cause $Y_i$ for any $i$.
	As in the previous section, we consider the bivariate case for ease of notation but this can be easily generalized. We consider the same generation process as in \eqref{eq::generation process}:
	\begin{align}
		\begin{split} \label{eq:gen-proc}
			X_{i,t}& =  \delta_{i, 0} +  \sum_{p=1}^P \delta_{i, p} X_{i, t-p} + \eta_{i, t}, \\
			Y_{i,t} &= \theta_{i, 0} + \sum_{p=1}^P \theta_{i, p} Y_{i, t-p} + \sum_{p=1}^P \beta_{i, p} X_{i, t-p} + \epsilon_{i, t}, 
		\end{split}
	\end{align} 
	for $i = 1, \dots , N$ denoting the cross-sectional units (\ie, individuals or panel members) and $t = 1, \dots , T$ the timestamps. We also assume that the stochastic processes $Y_i$, $X_i$ are scalar. The parameters $\theta_i$ denote the heterogeneous, autoregressive coefficients of the processes. The stochastic processes $\epsilon_{i, t}$ and $\epsilon_{j, t}$ as well as $\eta_{i, t}$ and $\eta_{j, t}$  are not assumed to be independent in $i \neq j$, \ie, cross-sectional dependencies might exist. Note that in contrast to the DH-test, we do not rely on an independence assumption across panel members.
	Figure \ref{fig: example structures} illustrates different types of dependencies between members of the panel and shows how this dependence is reflected in the dependence between the innovation processes.
	\begin{figure} 
		\begin{center}
			\includegraphics[scale=0.15]{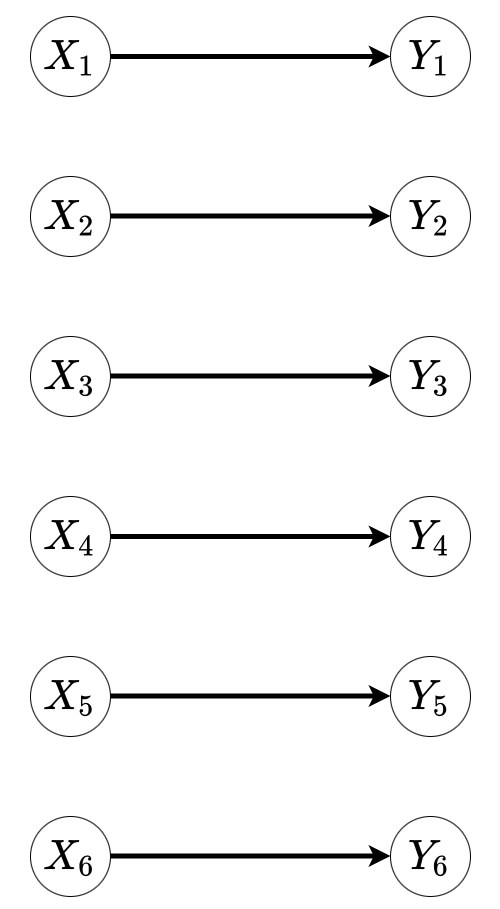}
			\includegraphics[scale=0.15]{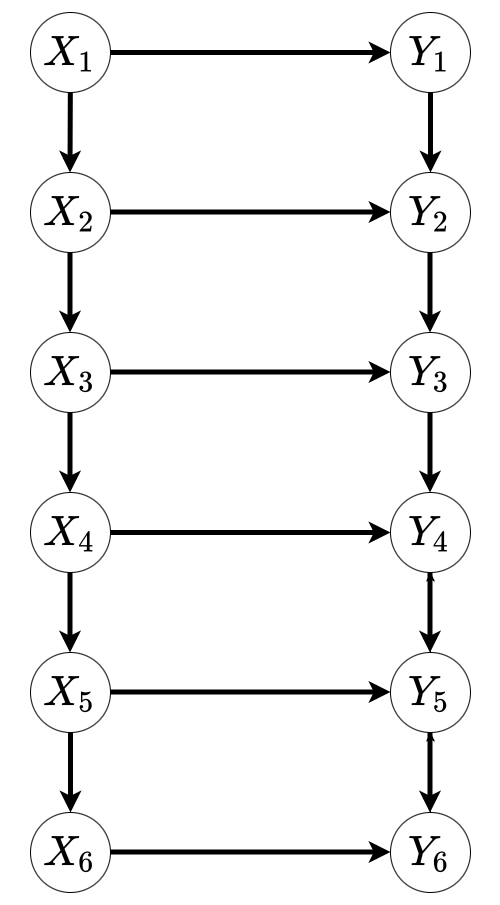} 
			\includegraphics[scale=0.15]{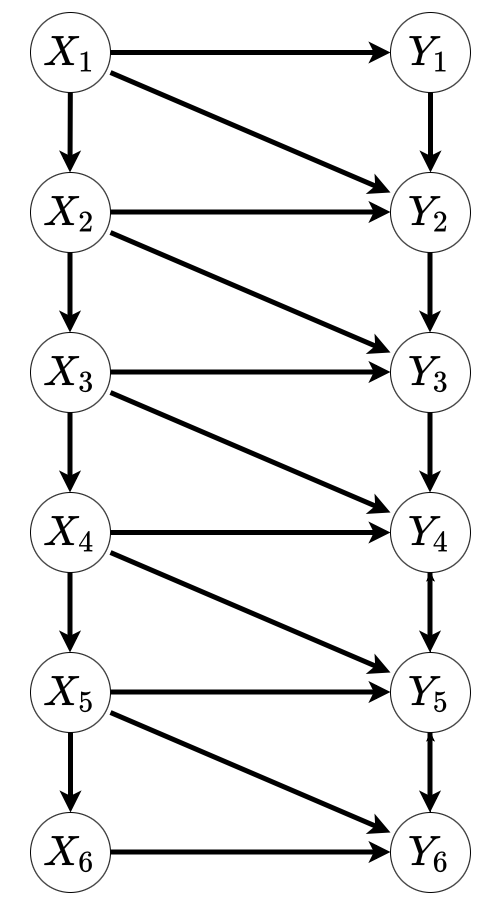}
		\end{center}
		\caption{\textit{Example of the summary graph (omitting self-cycles) of panel data where $X$ causally influences $Y$. \\
				\textbf{Left:} Panel data where no cross-sectional dependencies exist, \ie, the innovation processes are independent, namely $\epsilon_i$, $\eta_i$, $\epsilon_j$, $\eta_j$ are mutually independent for $j \neq i$. This structure satisfies the assumptions about the dependence between members for the DH-test and our  QPPA approach (see Section~\ref{sec:panel_boot}). \\
				\textbf{Middle:} Panel data with cross-sectional dependencies in the sense that $\epsilon_i$ and $\epsilon_j$ as well as $\eta_i$ and $\eta_j$ are dependent but $\epsilon_i$ and $\eta_j$ are independent for $j \neq i$. This structure \textbf{does not} satisfy the assumptions about the panel structure for the DH-test as cross-sectional dependencies exist. However, this structure \textbf{does} satisfy the assumptions about the panel structure for the QPPA approach. Further, for QPPA, it is not required that the cross-sectional dependencies are acyclic, there might be cycles in the causal dependencies between $\{X_i: i \in N\}$ as well as $\{Y_i: i \in N\}$ and it would still satisfy the assumptions of the QPPA approach.   \\
				\textbf{Right:} Panel data where cross-sectional dependencies exist where other panel members are also confounders of $X_i$ and $Y_i$. For the innovation processes this means that $\epsilon_i$ and $\epsilon_j$ as well as $\eta_i$ and $\eta_j$ are dependent \textbf{and} $\eta_j$ and $\epsilon_i$ are dependent for all $i, j$ as well. This structure \textbf{does not} satisfy the assumptions about the panel structure for the DH-test and our QPPA approach, seeing that $X_{i-1}$ would be a confounder of $X_i$ and $Y_i$. \\
				Note that this figure does not show any relations between the innovation processes in time. In general, for all settings we assume that the innovation processes are independent in time, \ie, for all  $t \neq s \neq  h \neq l$, $\epsilon_{i, t}$, $\eta_{i, s}$, $\epsilon_{j, h}$, $\eta_{j, l}$ are mutually independent (for all $i,j$).} 
		}  \label{fig: example structures}
	\end{figure}	
	Our test relies on the same hypotheses test as the DH-test, namely \eqref{DH null}. However, our algorithm is specifically designed for settings where the causal relation is the same for all individuals while the generation processes can be different, \ie, if $\beta_{i} = (\beta_{i,1}, \dots , \beta_{i, P}) \neq (0, \dots , 0)$ for one $i$, then this holds for all $i$, but not necessarily $\beta_{i} = \beta_{j}$ for $i \neq j$. If $\beta_{i} = (\beta_{i,1}, \dots , \beta_{i, P}) = (0, \dots , 0)$ for one $i$, then this holds for all $i$. The corresponding hypothesis test reads
	\begin{align}
		\begin{split}
			H_0:~ &\text{X does not Granger-cause Y, \ie,}  \\ &\beta_{i} = (\beta_{i,1}, \dots , \beta_{i, P}) = (0, \dots , 0) ~~~ \text{for all } i \label{eq:H0}
		\end{split}
		\\[10pt]
		\begin{split}
			H_1: ~&\text{X does Granger-cause Y, \ie,} \\ & \beta_{i} = (\beta_{i,1}, \dots , \beta_{i, P}) \neq (0, \dots , 0) ~~~ \text{for all } i. \label{eq:H1}
		\end{split}
	\end{align}
	Note that $X$ and $Y$ can be exchanged in this test and hence, we can test whether $X$ Granger-causes $Y$ or $Y$ Granger-causes $X$. 
	
	Compared to the DH-test described in Section \ref{sec: DH-test}, the hypothesis test \eqref{eq:H0} vs \eqref{eq:H1} is more restrictive, since we assume that all members of the panel share the same causes, whereas the DH-test does not rely on such a strong assumption. 
	That is, in the generation process of the DH-test under the alternative there exists \textit{at least} one member of the panel where $\beta_i \neq 0$.
	In contrast, under \eqref{eq:H1} it holds that for all $i$, $\beta_i \neq 0$. 
	However, this assumption is realistic in many cases, as, for instance, in the setting of our experiments on COVID-19 data. 
	For more details and explanations, see Section \ref{sec: covid experiment}. 
	Note, however, that the test procedure we will introduce can still guarantee type-I error control under the test scenario \eqref{DH null}.
	We show in Appendix \ref{sec: further exp synth} that our test is robust when causal connections are sporadically missing for some individuals, and that the difference between the two tests is negligible in practice when our assumption of uniformity in the existence of causal relations nearly holds.
	
	\subsection{Quantile p-value Panel Adjustment (QPPA)} \label{sec:panel_boot}
	
	In this section, we describe the procedure to test \eqref{eq:H0} vs \eqref{eq:H1} using the idea described in Section \ref{sec::p-values for high-dimensional regression}. 
	Instead of using an aggregated Wald statistic (see Section \ref{sec: DH-test}), we propose a procedure that is inspired by \cite{Meinhausen2009} (Section \ref{sec::p-values for high-dimensional regression}). That is, \cite{Meinhausen2009} uses the aggregation method \eqref{eq: quantile meinhausen} to aggregate p-values of different bootstrap samples.
	Translating this idea into our setting, we can aggregate the p-values of the panel members, hence, we treat individual panel members in the same way \cite{Meinhausen2009} treats different bootstrap runs.
	The aggregation will then control the type-I error asymptotically by the chosen significance level. 
	Since we calculate a Granger non-causality p-value for every member, we further need the following technical assumptions that we carry over from \cite{Dumitrescu2011}, see also \citep{Granger69}.
	\paragraph{Time i.i.d. residuals:} For each fixed $i \in \{1, \dots , N\}$ $\epsilon_{i, t}$ are independent for all $t = 1, \dots , T$ and normally distributed with $\mathbb{E}(\epsilon_{i, t}) = 0$ and $\mathbb{E} (\epsilon_{i, t}^2) < \infty$, where  $\mathbb{E} (\epsilon_{i, t}^2)$ is assumed to be constant in $t$. 	
	\paragraph{Covariance stationarity:} For all $i$ and $t$, it holds that $X_{i, t}$ and $Y_{i, t}$ have finite variance and $\mathbb{E}(X_{i,t}X_{j,t+h})$, $\mathbb{E}(Y_{i,t}Y_{j,t+h})$, $\mathbb{E}(Y_{i,t}X_{j,t+h})$, $\mathbb{E}(Y_{i,t})$ and $\mathbb{E}(X_{i,t})$ do not depend on $t$.
	
	Under these assumptions, we can introduce our test procedure, called {\bf Quantile p-value Panel Adjustment} (QPPA) in two steps.
	
	\paragraph{\bf Step 1: Compute a p-value for every member of the panel.} \label{step 1 qppa}
	The first step is the same as for the DH-test: We apply Granger Non-causality to each individual panel member, where we use a Wald-statistic to test for the presence of Granger causality. 
	Corresponding to these Wald statistics, we obtain an asymptotically correct p-value $p_{X \rightarrow Y}^{(i)}$ for each panel member.
	For instance, an $f$-statistic (which belongs to the family of Wald-statistics) can be used to calculate the corresponding $p$-values.
	Note that the construction of the $f$-statistic requires the assumption of time i.i.d. residuals as well as covariance stationarity. 
	Further, in order for the p-values to be (asymptotically) correct, $\eta_{j, t}$ and $\epsilon_{i, s}$ need to be independent for all $t$, $s$, $j$ and $i$. This requirement is necessary in general and does not impose additional restrictions to the generation process compared to the existing literature for Granger causality (see Figure \ref{fig: example structures} right for an example where this assumption is violated).
	\paragraph{Step 2: Aggregate p-values.}
	Similar to the procedure in Section \ref{sec::p-values for high-dimensional regression}, we aggregate the computed p-values as follows.
	For $\gamma \in (0,1)$, we define
	\begin{align}
		\begin{split}
			Q_{X \rightarrow Y}(\gamma) &:=\min \bigg\{1,\\  &\text{emp. } \gamma \text{-quantile} \left\{p_{X \rightarrow Y}^{(i)} / \gamma; ~ i=1, \dots , N \right\}   \bigg\}, \label{eq: def Q}
		\end{split}	
	\end{align}
	
	We can now formulate the main theorem of our work, the proof of which can be found in Appendix \ref{a:proofs}.
	\begin{thm} \label{thm: main 1}
		Assume the generation procedure \eqref{eq:gen-proc} and let $\alpha, \gamma \in (0,1)$. Then, $Q_{X \rightarrow Y}(\gamma)$ is an asymptotically correct p-value, \ie,
		\begin{align*}
			\limsup_{T \rightarrow \infty} \mathbb{P}\left( Q_{X \rightarrow Y}(\gamma) \le \alpha    \right) \le \alpha,
		\end{align*}
		where $T$ denotes the number of timestamps. 
	\end{thm}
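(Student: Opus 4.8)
The plan is to recognize that the statement is the panel-data incarnation of the generic quantile-aggregation bound of \cite{Meinhausen2009} (their Theorem 3.1), with the roles of bootstrap replications and panel members interchanged, wrapped in a $\limsup_{T\to\infty}$ because here the per-member p-values are only asymptotically valid. Concretely, I would (i) recall that each individual $f$-test p-value is an asymptotically correct p-value for Granger non-causality of member $i$ under $H_0$; (ii) translate the event $\{Q_{X\rightarrow Y}(\gamma)\le\alpha\}$ into a statement about how many of the $p_{X\rightarrow Y}^{(i)}$ fall below $\gamma\alpha$; (iii) bound that probability by a Markov/union argument that uses only marginal probabilities — hence no cross-sectional independence — and finally take $T\to\infty$ over the finitely many members.

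\emph{Step 1 (per-member validity).} Under $H_0$ in \eqref{eq:H0} (i.e.\ $\beta_i=0$ for all $i$ in \eqref{eq:gen-proc}) together with the assumptions of time i.i.d.\ residuals and covariance stationarity, the $f$-statistic computed for a fixed individual $i$ yields an asymptotically correct p-value: for every $u\in(0,1)$, $\limsup_{T\to\infty}\mathbb{P}\big(p_{X\rightarrow Y}^{(i)}\le u\big)\le u$. This is the classical consequence of the asymptotic $\chi^2$/$F$ limiting null distribution of the Wald/$f$ statistic for Granger non-causality in a stable autoregression (\emph{cf.} \cite{Granger69, Dumitrescu2011}); this is also precisely where the independence of $\eta_{j,t}$ and $\epsilon_{i,s}$ across all $(i,j,s,t)$ is used, since it makes the lagged $X_i$ admissible (exogenous with respect to $\epsilon_i$) regressors and renders the statistic asymptotically pivotal. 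Crucially, no independence across members is invoked at this step.

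\emph{Steps 2--3 (event translation and Markov bound).} Fix $\alpha,\gamma\in(0,1)$. Since $\alpha<1$, $\{Q_{X\rightarrow Y}(\gamma)\le\alpha\}$ equals the event that the empirical $\gamma$-quantile of $\{p_{X\rightarrow Y}^{(i)}/\gamma:\ i=1,\dots,N\}$ is at most $\alpha$; by the definition of the empirical $\gamma$-quantile this forces at least a $\gamma$-fraction of the ratios $p_{X\rightarrow Y}^{(i)}/\gamma$ to be $\le\alpha$, i.e.\ $S_T:=\#\{i:\ p_{X\rightarrow Y}^{(i)}\le\gamma\alpha\}\ge\gamma N$ (any standard quantile convention, with its floors/ceilings and strict-versus-weak inequalities, yields an inequality in this direction). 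Writing $S_T=\sum_{i=1}^N\mathbf{1}\{p_{X\rightarrow Y}^{(i)}\le\gamma\alpha\}$, Markov's inequality and linearity of expectation — the only probabilistic step, requiring no independence across $i$ — give, for every $T$,
\begin{align*}
\mathbb{P}\big(Q_{X\rightarrow Y}(\gamma)\le\alpha\big)\ \le\ \mathbb{P}(S_T\ge\gamma N)\ \le\ \frac{\mathbb{E}[S_T]}{\gamma N}\ =\ \frac{1}{\gamma N}\sum_{i=1}^N\mathbb{P}\big(p_{X\rightarrow Y}^{(i)}\le\gamma\alpha\big).
\end{align*}
Taking $\limsup_{T\to\infty}$ on the right-hand side and applying Step 1 with $u=\gamma\alpha$ to each of the finitely many $i$ yields $\limsup_{T\to\infty}\mathbb{P}(Q_{X\rightarrow Y}(\gamma)\le\alpha)\le\frac{1}{\gamma N}\cdot N\cdot\gamma\alpha=\alpha$, which is the claim.

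\emph{Main obstacle.} The genuinely delicate part is Step 1: making precise the sense in which the individual $f$-test p-values are asymptotically valid under the stated regularity conditions — verifying that estimating the autoregressive nuisance parameters $\theta_i$ does not perturb the limiting null law of the statistic, and that the exogeneity condition on the innovations is exactly what delivers it. Everything after that is the \cite{Meinhausen2009}-style counting argument, whose only nonstandard feature here is that $N$ is held fixed while $T\to\infty$, so that a finite sum of $\limsup$s can be bounded termwise; the cross-sectional dependence among the members is harmless precisely because Steps 2--3 use only a Markov/union bound on marginal events.
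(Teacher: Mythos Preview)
Your proposal is correct and follows essentially the same approach as the paper: the paper introduces $f_{X\rightarrow Y}(u)=\frac{1}{N}\sum_i\mathbf{1}\{p^{(i)}_{X\rightarrow Y}\le u\}$, uses the equivalence $\{Q_{X\rightarrow Y}(\gamma)\le\alpha\}=\{f_{X\rightarrow Y}(\alpha\gamma)\ge\gamma\}$ (your $S_T\ge\gamma N$ is the same statement up to the factor $N$), applies Markov's inequality and linearity of expectation to reduce to the marginal probabilities $\mathbb{P}(p^{(i)}_{X\rightarrow Y}\le\alpha\gamma)$, and then invokes the asymptotic validity of each individual p-value. Your added remarks on why Step 1 holds and on the irrelevance of cross-sectional dependence in Steps 2--3 are accurate and align with the paper's discussion.
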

	With this theorem, we obtain type-I error control for arbitrary $\gamma \in (0,1)$. 
	Note that the proof does not require any restrictions on the dependence between p-values. In particular, the individuals do not need to be independent, \ie, restrictions on the relation between $\epsilon_{i, t}$ and $\epsilon_{j, t}$ as well as $\eta_{i, t}$ and $\eta_{j, t}$ are not required.
	
	Although Theorem~\ref{thm: main 1} holds for arbitrary $\gamma$, choosing the right $\gamma$ could be difficult.
	While \cite{Meinhausen2009} recommend $\gamma = 0.5$ as a practical choice, they also construct another p-value which only requires to specify a lower bound for $\gamma$.
	This approach can also be translated into the panel data setting. 
	We include the corresponding procedure in Appendix \ref{sec: qppa thm2}.
	
	Note that, although our approach is based on the procedure of \cite{Meinhausen2009} which deals with high dimensional statistics, we do not consider a high dimensional setup here. The connection between  \cite{Meinhausen2009}  and our approach is the problem that a single p-value results in an unstable procedure which is hence hard to reproduce.
	
	Throughout this paper, we only consider the bivariate case for simplicity but this can easily be generalized and corresponding  FWER/FDR procedures based on \cite{Meinhausen2009} can be constructed. Moreover, since $Q_{X \rightarrow Y}(\gamma)$ is an asymptotically correct p-value, also classical FWER/FDR control procedures (e.g. Bonferoni or Benjamini-Hochberg) can be applied in the multivariate case.
	
	Before moving to numerical experiments, we briefly discuss some implications of our assumptions and potential limitations of our approach.
	Both covariance stationarity and residuals that are i.i.d. in time are standard assumptions in Granger causality tests.
	Although covariance stationarity is a strong assumption, different data preprocessing procedures could be applied to obtain stationarity in practice \citep{Hyndman2018}, also see Appendix \ref{non-stationary}. 
	For example, we employ such methods to make the COVID-19 time series stationary in our study in Section \ref{sec: covid experiment}.
	Similarly, assuming residuals are i.i.d.~or Gaussian are strong assumptions. 
	Our experiments on COVID-19 data and further experiments in Appendix \ref{sec: further exp synth} examine the robustness of our algorithm against the violation of this assumption. 
	Finally, the generation process \eqref{eq:gen-proc} has two further implications: assuming $\epsilon$ and $\eta$ are independent implicitly assumes causal sufficiency (\ie, no hidden confounders), and that there are no {\em instantaneous effects} between $X_t$ and $Y_t$.
	Again, both assumptions are typical in Granger causality analysis.
	We discuss how to decrease the false discovery rate in the presence of confounding, subject to causal faithfulness, in Appendix \ref{confounding} and the robustness of our approach under instantaneous effects in Appendix \ref{instantenous effects}.

	\section{EXPERIMENTS WITH SYNTHETIC DATA} \label{sec: experiments}
	In this section, we present experimental results on synthetic data. We compare the DH test, DH block bootstrap method and our approach (QPPA).
	To compare these methods, we use the existing \texttt{xtg-cause} package developed by \cite{Lopez2017}, which includes the DH-test and the DH block bootstrap test. 
	For the QPPA approach presented in Section~\ref{sec:testing granger non-causality}, we use our own implementation where we also rely on the Granger causality test implementation in \texttt{statsmodels} \citep{seabold2010statsmodels} to compute f-statistics and corresponding p-values as explained in Section \ref{step 1 qppa}.
	We compare these approaches in two scenarios, respectively without and with cross-sectional dependencies, detailed below.
	
	\paragraph{Experiment 1:} For the first experiment, we do not insert cross-sectional dependencies. 
	The generation process, an autoregressive process of order 1, is specified as follows:
	\begin{align*}
		X_{i, t} &=  \delta_{i, 1} X_{i, t-1} + \eta_{i, t} \\
		Y_{i, t} &=  \theta_{i, 1} Y_{i, t-1} + \beta_{i, 1} X_{i, t-1} + \epsilon_{i, t},
	\end{align*} 
	where the innovation processes are i.i.d.~Gaussian random variables
	with $\eta_{i, t}, \epsilon_{i, t} \sim N(0,0.1)$ and we draw the parameters from a uniform distribution: $\delta_{i, 1},  \theta_{i, 1} \sim \textit{Unif}(0.2, 0.8)$.
	
	\paragraph{Experiment 2:} For this experiment we insert cross-sectional dependencies for $X$ and $Y$:
	\begin{align*}
		X_{i, t} &=  \delta_{i, 1} X_{i, t-1} + \zeta_{i, t} \\
		Y_{i, t} &=  \theta_{i, 1} Y_{i, t-1} + \beta_{i, 1} X_{i, t-1} + \xi_{i, t},
	\end{align*} 
	where $\boldsymbol{\zeta}_{t} := (\zeta_{1, t}, \dots \zeta_{N, t}) \sim N(\boldsymbol{0},\Sigma)$, $\boldsymbol{\xi}_{t} := (\xi_{1, t}, \dots \xi_{N, t}) \sim N(\boldsymbol{0},\tilde{\Sigma})$, $\Sigma = A^T A, \tilde{\Sigma} = \tilde{A}^T \tilde{A}$ and $A, \tilde{A}$ are random vectors where each entry is sampled from $Unif(0.5,1.5)$. Finally, $\delta_{i, 1}, \theta_{i, 1} \sim \textit{Unif}(0.2, 0.8)$ as in Experiment 1.
	The cross-sectional dependency is embedded in the multivariate normal distribution of the noise terms $\boldsymbol{\zeta}_t$ and $\boldsymbol{\xi}_t$. 
	Note however that there is no dependence between $\boldsymbol{\zeta}_t$ and $\boldsymbol{\xi}_t$ and also no dependence of the noise terms in time.
	
	In both experiments we either sample $\beta_{i, 1}$ from $\textit{Unif}(0.2, 0.8)$ if the null should be rejected and set $\beta_{i, 1} = 0$ if the null should be accepted. 
	
	We report power and false discovery rates (FDR) for DH-test, DH-test with block bootstrap (\texttt{DH-test-bb}), and QPPA in Tables~\ref{fig:empirical results} and \ref{fig:empirical results exp 2} for experiments 1 and 2 respectively. 
	Each number reported is an average of 100 experiments.
	For QPPA we use $\gamma = 0.5$ in \eqref{eq: def Q} and for the DH-test and DH-test with block bootstrap we use the statistic $\tilde{Z}_N^{HNC}$, see Appendices \ref{sec: dh block bootstrap} and \ref{sec: DH further explanation}.
	
	Table \ref{fig:empirical results} shows that QPPA performs equally well as DH and DH-bb in the setting without cross-sectional dependencies with a sufficiently large history  (T>10).
	The results in Table \ref{fig:empirical results exp 2} exhibit that if cross-sectional dependencies exist, the FDR of the DH-test increases significantly above the significance level 0.05 even in the large sample regime. Also DH-test-bb shows higher FDR than 0.05 even in the large sample regime. This is not the case for QPPA which remains robust against this type of dependency. Moreover, the power of both tests is 1 in the large sample regime. 
	\begin{table} [ht]
			\caption{\textit{Empirical results for Experiment 1 (no cross-sectional dependencies)}} \label{fig:empirical results}
		\centering
		\scalebox{0.7}{
			\begin{tabular}{llllllll}
				\toprule
				&      & \multicolumn{2}{l}{QPPA} & \multicolumn{2}{l}{DH-test} & \multicolumn{2}{l}{DH-test-bb} \\
				&      & Power &      FDR &   Power &      FDR &              Power &      FDR \\
				\midrule
				T=10 & N=1 &  0.15 &  0.211 &    0.35 &  0.186 &               0.33 &  0.154 \\
				& N=10 &   0.000 &        0.000 &    0.93 &  0.212 &               0.74 &  0.119 \\
				& N=30 &   0.000 &        0.000 &     1.0 &  0.359 &               0.97 &  0.110 \\
				T=50 & N=1 &  0.77 &   0.038 &    0.83 &  0.117 &               0.83 &  0.126 \\
				& N=10 &  0.98 &      0.000 &     1.0 &  0.074 &                1.0 &  0.074 \\
				& N=30 &   1.0 &      0.000 &     1.0 &  0.048 &                1.0 &  0.083 \\
				T=100 & N=1 &  0.91 &  0.022 &    0.94 &  0.078 &               0.97 &  0.093 \\
				& N=10 &   1.0 &      0.000 &     1.0 &  0.065 &                1.0 &  0.074 \\
				& N=30 &   1.0 &      0.000 &     1.0 &  0.074 &                1.0 &  0.107 \\
				\bottomrule
			\end{tabular}
		}
	\end{table}
	\begin{table} [ht]
		\caption{\textit{Empirical results for Experiment 2 (with cross-sectional dependencies)}} \label{fig:empirical results exp 2}
		\centering
		\scalebox{0.7}{
			\begin{tabular}{llllllll}
				\toprule
				&      & \multicolumn{2}{l}{QPPA} & \multicolumn{2}{l}{DH-test} & \multicolumn{2}{l}{DH-test-bb} \\
				&      & Power &      FDR &   Power &      FDR &              Power &      FDR \\
				\midrule
				T=10 & N=1 &  0.170 &     0.320 &    0.420 &  0.236 &               0.340 &  0.261 \\
				& N=10 &  0.120 &      0.000 &    0.630 &  0.344 &               0.320 &  0.220 \\
				& N=30 &  0.160 &  0.059 &    0.810 &  0.449 &               0.380 &  0.191 \\
				T=50 & N=1 &   0.800 &  0.048 &    0.850 &  0.086 &               0.850 &  0.105 \\
				& N=10 &  0.920 &      0.000 &     1.0 &  0.180 &               0.990 &  0.083 \\
				& N=30 &  0.960 &      0.000 &     1.0 &  0.408 &               0.990 &  0.075 \\
				T=100 & N=1 &  0.940 &  0.051 &    0.970 &  0.049 &               0.980 &  0.110 \\
				& N=10 &  0.970 &   0.010 &     1.0 &  0.174 &                1.0 &  0.074 \\
				& N=30 &   1.0 &      0.000 &     1.0 &   0.419 &                1.0 &  0.082 \\
				\bottomrule
			\end{tabular}
			
		}
	\end{table}
	For low sample sizes, QPPA is rather conservative (for the choice $\gamma = 0.5$), hence it has low power and low FDR for $T=10$. 
	The opposite holds for the DH-test. 
	Note that for all approaches we set $\alpha = 0.05$. However, in  the experiments, we observe lower FDR than the chosen significance level for QPPA. This is due to the fact that the p-values of QPPA are not uniformly distributed but $\alpha$ is only an upper bound for the false discovery rate, see Theorem \ref{thm: main 1}.  
	While we heuristically set $\gamma = 0.5$ for results reported in Figures~\ref{fig:empirical results} and \ref{fig:empirical results exp 2},
	we show in Figures~\ref{fig: varying gamma exp 1} and \ref{fig: varying gamma exp 2} how power and FDR vary for $\gamma = 0.01, 0.02, \dots , 0.99$.
	The Figure shows that in these experiments, the power decreases with increasing $\gamma$. However, the FDR is 0 for all $\gamma$ due to the well separation between the distribution under the null and alternative.
	\begin{figure}[ht]
		\centering
		\includegraphics[scale=0.4]{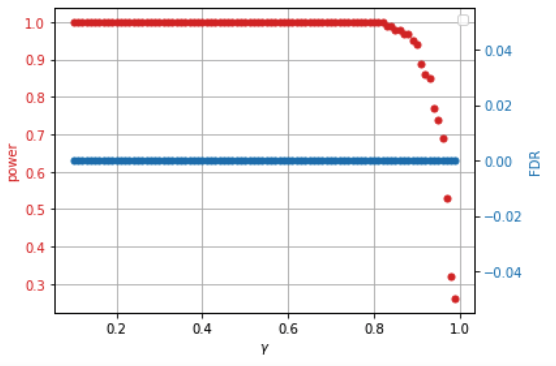}
		\caption{\textit{Empirical results in the experimental setup 1 for QPPA for $\gamma = 0.01, 0.02 \dots 0.99$ }, where we set $T=100$ and $N=30$.} \label{fig: varying gamma exp 1}
	\end{figure}
	\begin{figure}[ht]
		\centering
		\includegraphics[scale=0.4]{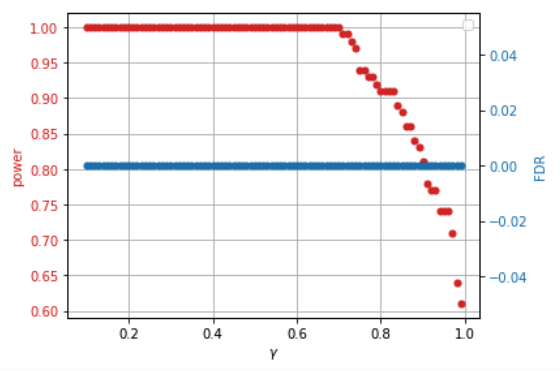}
		\caption{\textit{Empirical results in the experimental setup 2 for QPPA for $\gamma = 0.01, 0.02 \dots 0.99$ }, where we set $T=100$ and $N=30$.} \label{fig: varying gamma exp 2}
	\end{figure}
	We add further experiments in Appendix \ref{sec: further exp synth}, where we also include additional results about the behavior of QPPA for varying $\gamma$.
	\section{EXPERIMENTS WITH COVID-19 DATA} \label{sec: covid experiment}
	While synthetic data experiments confirm our test performs in line with expectations, we carry out a second set of experiments to demonstrate both the robustness of our approach on real-world problems, and how the problem we address arises ubiquitously.
	To this end, we give an illustrative study of causal discovery in time series related to COVID-19.
	Here, we regard individual countries and regions as panel members, and we analyze the causal relation between confirmed cases and deaths within each country.
	Since increasing number of confirmed cases lead to more deaths (after several days to approximately two weeks), we regard
	\begin{align*}
		\text{confirmed cases } \longrightarrow \text{deaths}
	\end{align*} 
	as the ground truth causal relation, and analyze causal influences from the past two weeks to the present.
	
	Some remarks on the ground truth causal relation: Although one might argue that deaths can causally influence confirmed cases because more vulnerable immune systems reduce as the number of deaths increase, we expect such effects to take place over longer periods of time that should not occur in the scope of our study. \\ Further, if a huge proportion of death cases did not get tested before death, death cases can causally influence confirmed cases because died people might get tested post mortem. However, confirmed cases is a proxy for the number of infections and, additionally,
	COVID-19 deaths where people are not going to the hospital and die without confirmation but get tested post mortem
	are rather rare and should not be statistically significant.

	As one might expect, cross-sectional dependencies among countries should exist, \eg, through active cases traveling between countries (see Appendix \ref{sec: covid panel dependence} for an analysis confirming this hypothesis).
	
	We use the Johns Hopkins CSSE COVID-19 data repository\footnote{The data set can be downloaded at \url{https://github.com/CSSEGISandData/COVID-19/tree/master/csse\_covid\_19\_data/csse\_covid\_19\_time\_series}, from this repository we used the global csv files} \citep{dong2020}. 
	The data set contains confirmed cases and deaths due to COVID-19, collected globally in 280 countries or regions, where cases are recorded daily between 22nd January 2020 and 4th October 2021. 
	After applying preprocessing steps outlined in Appendix \ref{sec: covid preprocessing}, there remain 335 days of data for 225 panel members (\ie, countries or regions). 
	In Table \ref{fig: covid results} we give p-values of the DH-test, DH-test with block bootstrap and QPPA for the causal influence from confirmed cases to deaths and visa versa, where we allow a time order in the underlying models to allow influence for up to two weeks.
	\begin{table}[ht]
		\centering
		\caption{\textit{Results of the covid-19 causal discovery study. p-val QPPA relates to the p-values obtained by our QPPA approach with $\gamma=0.5$ in \eqref{eq: def Q}, p-val DH-test to the p-values obtained by the DH-test and p-val DH-test-bb to the p-values obtained by the DH-test with block bootstrap, where we use 20 breps, see Appendix \ref{sec: dh block bootstrap} and the statistics $\tilde{Z}_N^{HNC}$, see Appendix \ref{sec: DH further explanation}. $c \rightarrow d$ is the corresponding p-value to the causal link "confirmed cases causes deaths" and $d \rightarrow c$ the p-value to the causal link "deaths causes confirmed cases". }} \label{fig: covid results}
		\scalebox{0.7}{\begin{tabular}{crrllll}
				\toprule
				P (=lag order) & \multicolumn{2}{l}{p-val QPPA} & \multicolumn{2}{l}{p-val DH-test} & \multicolumn{2}{l}{p-val DH-test-bb} \\
				{}& c -> d & d -> c &        c -> d & d -> c &            c -> d & d -> c \\
				\midrule
				1 &      0.610 &  0.607 &         0.000 &  0.000 &             0.000 &  0.000 \\
				2 &      0.323 &  0.343 &         0.000 &  0.000 &             0.000 &  0.000 \\
				3 &      0.183 &  0.239 &         0.000 &  0.000 &             0.000 &  0.000 \\
				4 &      0.091 &  0.133 &         0.000 &  0.000 &             0.000 &  0.000 \\
				5 &      0.055 &  0.094 &         0.000 &  0.000 &             0.000 &  0.000 \\
				6 &      0.036 &  0.084 &         0.000 &  0.000 &             0.000 &  0.000 \\
				7 &      0.015 &  0.110 &         0.000 &  0.000 &             0.000 &  0.000 \\
				8 &      0.005 &  0.064 &         0.000 &  0.000 &             0.000 &  0.000 \\
				9 &      0.003 &  0.082 &         0.000 &  0.000 &             0.000 &  0.000 \\
				10 &      0.002 &  0.080 &         0.000 &  0.000 &             0.000 &  0.000 \\
				11 &      0.001 &  0.065 &         0.000 &  0.000 &             0.000 &  0.000 \\
				12 &      0.001 &  0.056 &         0.000 &  0.000 &             0.000 &  0.000 \\
				\bottomrule
		\end{tabular}}
	\end{table}
	The results show that the DH-test and DH-test with block bootstrap reject the null for both directions with p-value of 0.000 for all lags, leading to a wrong conclusion (confirming that deaths also cause confirmed cases).
	In contrast, QPPA rejects the null only for the correct direction (confirmed cases $\rightarrow$ deaths) to the significance level of 5\% after a reasonable time order is given. 
	Further, it is reasonable that the null from confirmed cases to deaths gets only rejected after including multiple lags, because infection (more precisely confirmation of infection that is recorded in the data under consideration) with COVID-19 causes death with some time delay. 
	Figure \ref{fig: cov diff gamma} shows power and FDR of QPPA for $\gamma = 0.01, 0.02, \dots , 0.99$. Compared to the corresponding study in Section \ref{sec: experiments} shown in Figure \ref{fig: varying gamma exp 1} and \ref{fig: varying gamma exp 2}, we can see that the FDR and power for small $\gamma$ is large and monotonically decreasing with increasing $\gamma$. Hence, choosing $\gamma$ amounts to a trade-off between high power and low FDR. We include more results of the behavior of QPPA for varying $\gamma$ in Appendix \ref{sec: additional covid exp} where we use different numbers of countries/regions per run.
	
	We conclude from our results that failing to account for cross-sectional dependencies can easily result in wrong conclusions using baseline methods, and that QPPA not only mitigates this risk but is also robust against real-world cases where some of its assumptions about the data generating process (\eg, Gaussianity of innovations) do not hold.
	\begin{figure}[ht]
		\centering
		\includegraphics[scale=0.40]{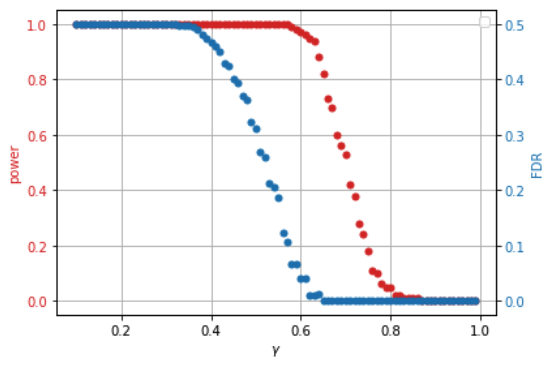}
		\caption{\textit{Empirical results for COVID-19 data about confirmed cases and deaths using QPPA with $\gamma = 0.01, 0.02, \dots, 0.99$}. To calculate power and FDR, we randomly selected 60 countries/regions out of the 225 and checked whether QPPA detects the causal relation $c \rightarrow d$ and $d \rightarrow c$ respectively to the significance level 5\% and repeat this 100 times. } \label{fig: cov diff gamma}
	\end{figure}

	In general, if one wants to recover the unknown causal relations of a panel time series dataset, we recommend to apply QPPA for varying $\gamma$. If the results are consistent for a large interval of $\gamma$, there is strong evidence to believe that the obtained causal structure for these $\gamma$ captures the relations of the underlying generative process.
	\section{CONCLUSION AND OUTLOOK}
	In this paper, we propose a new approach to causal discovery on panel data with Granger causality using a quantile based p-value adjustment approach.
	We calculate a Wald statistic for each individual and correspondingly generate an asymptotically correct aggregated p-value for the panel. 
	Compared to the most widely used causal discovery method on panel data, the DH-test, we aggregate p-values instead of the individual Wald statistics. 
	In this way, we are able to account for cross-sectional dependencies that may exist among individuals in the panel.
	
	Numerical experiments on synthetic data confirm that our approach outperforms both the DH-test and its block bootstrap variant, designed to address cross-sectional dependencies.
	Notably, in contrast to baseline methods, our approach also correctly discovers causal relationships in a real world scenario -- the cause-effect relationship between COVID-19 infections and deaths.
	Our results show that not capturing cross-sectional dependencies easily leads to incorrect conclusions, and also that our method is robust in real-world settings where some of its statistical assumptions may be violated.
	
	
	Finally, we emphasize that the p-value aggregation method we employ is a general approach and does not depend on the specific generation process. 
	Therefore, this method could be applied in other applications or causal discovery methods for panel data (\eg, on non-time series data), which remains an exciting avenue for further research.
	\bibliography{ref_arxiv}


\clearpage
\appendix

\thispagestyle{empty}
\appendix


\section{FURTHER EXPLANATIONS OF THE DH TEST} \label{sec: DH further explanation}
Corresponding to the average Wald-static 
\begin{align*}
	W_{N, T}  := \frac{1}{N} \sum_{i=1}^{N} W_{i, T},
\end{align*}
\cite{Dumitrescu2011} show that for the normalized statistics under some regularity conditions it holds
\begin{align} 
	\begin{split}
		Z_{N, T}^{Hnc} &:= \sqrt{\frac{N}{2P}}(W_{N, T} - P) \stackrel{T, N \rightarrow \infty}{\longrightarrow} N(0,1 ) \\[5pt]
		\tilde{Z}_N^\textit{Hnc}&:= \frac{\sqrt{N}\left[ W^\textit{Hnc}_{N,T} - N ^{-1} \sum_{i=1}^N \mathbb{E}(W_{i, T}) \right]}{\sqrt{N^{-1} \sum_{i=1}^N \textit{Var}(W_{i, T})}} \stackrel{N \rightarrow \infty}{\longrightarrow} N(0,1) \label{eq::DH convergence}
	\end{split}
\end{align} 
in distribution, where $N(0,1)$ denotes the normal distribution with expectation 0 and variance 1 (where $P$ denotes the lag order, $T$ the number of timestamps and $N$ the number of individuals).
Eq. \eqref{eq::DH convergence} can now be used to test \eqref{DH null}:  If the probability of obtaining the realization of $\tilde{Z}_N^\textit{Hnc}$, $Z_{N, T}^{Hnc}$ respectively w.r.t. the standard normal distribution is low (corresponding to the chosen significance level), $H_0$ is rejected. \cite{Juodis2021} constructed a different test statistic in the same setting with the additional benefit that, in contrast to the DH-test, it accounts for "Nickell" bias which occurs if $N/T^2\rightarrow 0$ does not hold. 

\section{DH BLOCK BOOTSTRAP TEST} \label{sec: dh block bootstrap}
The DH block bootstrap procedure relies on a resampling idea, \textit{cf.} \cite{Dumitrescu2011} Section 6.2.:
\begin{enumerate}
	\item Define the model for each panel member to test Granger causality. Here, the model under consideration is
	\begin{align*}
		Y_{i,t} &= \theta_{i, 0} + \sum_{p=1}^P \theta_{i, p} Y_{i, t-p} + \sum_{p=1}^P \beta_{i, p} X_{i, t-p},
	\end{align*}
	see \eqref{eq::generation process}.
	\item
	Estimate the model and compute the corresponding test statistics $Z^{Hnc}_{N, T}, \tilde{Z}^{Hnc}_{N}$ for each panel member.
	\item
	Estimate the model under the null (no Granger causality, see \ref{DH null}), \textit{i.e.}, estimate a model 
	\begin{align*}
		Y_{i,t} &= \tilde{\theta}_{i, 0} + \sum_{p=1}^P \tilde{\theta}_{i, p} Y_{i, t-p}
	\end{align*}
	for each panel member and compute the residual vectors of size $(T, 1)$.
	\item
	Resample the residuals with replacement for each panel member with "blocks" of size 1, if we want to take dependencies across time into account, we could increase the block size.
	\item
	Construct resampled time series $\tilde{Y}_{i, t}$ under the null:
	\begin{align*}
		\tilde{Y}_{i,t} &= \tilde{\theta}_{i, 0} + \sum_{p=1}^P \tilde{\theta}_{i, p} \tilde{Y}_{i, t-p} + \tilde{\epsilon}_{i, t}
	\end{align*}
	where $(\tilde{\epsilon}_{i, t})_{t}$ denotes the resampled noise of the $i$-th panel member.
	\item
	Estimate the model defined in step 1 for the resampled time series $\{\tilde{Y}_{i, t}\}_{t}$ and construct the statistics of step 2 for this model and resampled time series.
	\item
	Repeat steps 5 and 6 a large amount of times.
	\item
	Compare the test statistics of step 2 against the test statistics obtained from steps 5 - 7.
\end{enumerate}
Since this procedure relies on resampling, we create dependencies especially in low sample regimes. Further, it relies on the generation of a new dataset in step 5 which could be not robust to violation of assumptions on real data. The DH block bootstrap approach is implemented in the \texttt{xtg-cause} library developed by \cite{Lopez2017}. Different parameters can be specified, \textit{e.g.}, the number of lags (this parameter can also be specified for the DH-test in the same library) and the number of \textit{breps} which denotes the number of repetitions of step 7.  
\section{QPPA WITH LOWER BOUND ON $\boldsymbol \gamma$} \label{sec: qppa thm2}
As explained in Section \ref{sec:panel_boot}, our aggregation procedure relies on the choice of a $\gamma \in (0,1)$.
Since a proper selection of $\gamma$ is difficult, a different aggregation method is proposed. Specify a lower bound for $\gamma$, which we denote by $\gamma_{\min} \in (0,1)$ and correspondingly define 

\begin{align}
	P_{X \rightarrow Y} := \min \left( (1- \log \gamma_{\min}) \inf_{\gamma \in (\gamma_{\min}, 1)} Q_{X \rightarrow Y} (\gamma) , ~~1   \right) \label{p-agreg}
\end{align}
for some fixed $\gamma_{\min} \in (0,1)$, where a recommended choice is $\gamma = 0.05$. Then, it holds:
\begin{thm} \label{thm: main 2}
	Assume the generation procedure \eqref{eq:gen-proc} and let $\alpha, \gamma \in (0,1)$. Then, $P_{X \rightarrow Y}(\gamma)$ is a asymptotically correct p-value, \ie,
	\begin{align*}
		\limsup_{T \rightarrow \infty} \mathbb{P}\left(  P_{X \rightarrow Y}(\gamma) \le \alpha    \right) \le \alpha,
	\end{align*}
	where $T$ denotes the number of timestamps. 
\end{thm}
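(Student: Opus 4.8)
The plan is to follow the route of \cite{Meinhausen2009}, Theorem~3.2, transposed to the panel setting: the individual p-values $p_{X\rightarrow Y}^{(1)},\dots,p_{X\rightarrow Y}^{(N)}$ now play the role of their multi-split p-values, and the only property of them that is needed is that under $H_0$ each is asymptotically (as $T\to\infty$) stochastically at least uniform --- precisely what the $f$-statistic construction of Step~1 delivers under the time-i.i.d.\ residual and covariance-stationarity assumptions, and the same ingredient that underlies Theorem~\ref{thm: main 1}. In particular, no independence across panel members is used.

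First I would unfold the definitions down to the individual p-values. Let $p^{(1)}\le\dots\le p^{(N)}$ be the order statistics of $\{p_{X\rightarrow Y}^{(i)}\}_{i=1}^N$ and put $\widehat F(t):=\tfrac1N\,\#\{i:\, p_{X\rightarrow Y}^{(i)}\le t\}$. Under the usual convention the empirical $\gamma$-quantile of $\{p^{(i)}/\gamma\}$ is $p^{(\lceil N\gamma\rceil)}/\gamma$, so $Q_{X\rightarrow Y}(\gamma)=\min\{1,\,p^{(\lceil N\gamma\rceil)}/\gamma\}$ and hence
\begin{align*}
\inf_{\gamma\in(\gamma_{\min},1)}Q_{X\rightarrow Y}(\gamma)=\min\Big\{1,\ \min_{k>N\gamma_{\min}}\tfrac{N p^{(k)}}{k}\Big\}.
\end{align*}
Since $1-\log\gamma_{\min}>1>\alpha$, setting $\beta:=\alpha/(1-\log\gamma_{\min})\in(0,1)$ yields the elementary identity and inclusion
\begin{align*}
\{P_{X\rightarrow Y}\le\alpha\}=\Big\{\exists\,k>N\gamma_{\min}:\ p^{(k)}\le\tfrac{\beta k}{N}\Big\}\ \subseteq\ \Big\{\sup_{\gamma\in(\gamma_{\min},1)}\tfrac{\widehat F(\beta\gamma)}{\gamma}\ge 1\Big\},
\end{align*}
where the inclusion takes $\gamma=k/N$ (or $\gamma\uparrow 1$ when $k=N$), a null set from ties being harmless.

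The heart of the proof is a single maximal-inequality bound for the last event; the point is that a naive union bound over $k$ (or over $\gamma$) would cost a factor of order $N$ rather than $\log(1/\gamma_{\min})$, so one must be more careful. I would write $\widehat F(\beta\gamma)/\gamma=\tfrac1N\sum_{i}\gamma^{-1}\mathbf{1}\{p_{X\rightarrow Y}^{(i)}\le\beta\gamma\}$, bound the supremum of the sum by the sum of the suprema, and take expectations to get
\begin{align*}
\mathbb{P}\Big(\sup_{\gamma\in(\gamma_{\min},1)}\tfrac{\widehat F(\beta\gamma)}{\gamma}\ge 1\Big)\ \le\ \mathbb{E}\Big[\sup_{\gamma}\tfrac{\widehat F(\beta\gamma)}{\gamma}\Big]\ \le\ \frac1N\sum_{i=1}^N\mathbb{E}\Big[\frac{\mathbf{1}\{p_{X\rightarrow Y}^{(i)}<\beta\}}{\max(\gamma_{\min},\,p_{X\rightarrow Y}^{(i)}/\beta)}\Big].
\end{align*}
For a $\mathrm{Unif}(0,1)$ variable the inner expectation is $\int_0^\beta(\max(\gamma_{\min},u/\beta))^{-1}\,du=\beta(1-\log\gamma_{\min})$ --- substitute $v=u/\beta$ and split the integral at $v=\gamma_{\min}$ --- and this is exactly the step that produces the constant $1-\log\gamma_{\min}$. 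Because under $H_0$ each $p_{X\rightarrow Y}^{(i)}$ converges in distribution to $\mathrm{Unif}(0,1)$ as $T\to\infty$ (or is asymptotically sub-uniform) and the integrand is bounded by $\gamma_{\min}^{-1}$ and continuous except at the continuity point $u=\beta$, each inner expectation has $\limsup_{T}$ at most $\beta(1-\log\gamma_{\min})$; summing the finitely many terms gives $\limsup_{T\to\infty}\mathbb{P}(P_{X\rightarrow Y}\le\alpha)\le\beta(1-\log\gamma_{\min})=\alpha$.

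The step I expect to be the main obstacle is securing the sharp logarithmic constant: it is the ``pull the supremum inside the sum, then integrate'' device --- together with the set identity in the second display, which certifies that working over the open interval $(\gamma_{\min},1)$ and with the ceiling $\lceil N\gamma\rceil$ costs nothing extra --- that makes $1-\log\gamma_{\min}$ appear rather than a larger multiple. The remaining ingredients are routine but need care: the $\min\{\cdot,1\}$ truncations, the boundary case $\gamma\uparrow 1$, and, just as in the proof of Theorem~\ref{thm: main 1}, interchanging the limit $T\to\infty$ with the finite average over $i$, which is legitimate here precisely because the panel has finitely many members and the relevant functional of each individual p-value is bounded and almost everywhere continuous with respect to the limiting uniform law.
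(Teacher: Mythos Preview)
Your proposal is correct and follows essentially the same route as the paper's proof, which in turn transposes \cite{Meinhausen2009}, Theorem~3.2: pull the supremum over $\gamma$ inside the average of indicators, compute the resulting expectation exactly for a uniform variable to obtain the $(1-\log\gamma_{\min})$ factor, pass to the limit using that each $p_{X\rightarrow Y}^{(i)}$ is asymptotically sub-uniform and the integrand is bounded and weakly decreasing, and finish with Markov. The only cosmetic differences are that you unpack the empirical quantile via order statistics and introduce $\beta=\alpha/(1-\log\gamma_{\min})$ upfront, whereas the paper works directly with the equivalence $Q_{X\rightarrow Y}(\gamma)\le\alpha\Leftrightarrow f_{X\rightarrow Y}(\alpha\gamma)\ge\gamma$ carried over from the proof of Theorem~\ref{thm: main 1} and rescales by $(1-\log\gamma_{\min})$ only at the very end.
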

Similar to Theorem \ref{thm: main 1}, the aggregation idea \eqref{p-agreg} relies on an aggregation idea from high dimension statistics (see \cite{Meinhausen2009} Theorem 3.2).
\section{PROOFS} \label{a:proofs}
\subsection{Proof of Theorem \ref{thm: main 1}}
\begin{proof}
	We follow the idea of the proof of \cite{Meinhausen2009} Theorem 3.1. Therefore, let 
	\begin{align}
		f_{X \rightarrow Y}(u) := \frac{1}{N} \sum_{i=1}^N \mathds{1}\{p_{X \rightarrow Y}^{(i)} \le u \}, ~~~u \in (0,1). \label{eq: def f}
	\end{align}
	Notice that 
	\begin{align}
		Q_{X \rightarrow Y} (\gamma) \le \alpha ~~~ \Leftrightarrow ~~~  f_{X \rightarrow Y} (\alpha \gamma) \ge \gamma  \label{eq: equiv f q}
	\end{align}
	and hence 
	\begin{align*}
		\mathbb{P} \left(Q_{X \rightarrow Y}(\gamma) \le \alpha  \right)  =   \mathbb{P}\Big( f_{X \rightarrow Y}(\alpha \gamma) \ge \gamma  \Big). 
	\end{align*}
	Using the Markov inequality, we have
	\begin{align*}
		\mathbb{P}\Big( f_{X \rightarrow Y}(\alpha \gamma) \ge \gamma  \Big)& \le \frac{1}{\gamma}  \mathbb{E}\big( f_{X \rightarrow Y}(\alpha \gamma)   \big) \\
		&=  \frac{1}{\gamma} \frac{1}{N} \sum_{i=1}^N  \mathbb{E} \left(   \mathds{1}\{p_{X \rightarrow Y}^{(i)} \le \alpha \gamma  \} \right). \\
		&= \frac{1}{\gamma} \frac{1}{N} \sum_{i=1}^N  \mathbb{P} \left(  p_{X \rightarrow Y}^{(i)} \le \alpha \gamma  \right)
	\end{align*}
	Since $p^{(i)}_{X \rightarrow Y}$ is by assumption asymptotically correct for each $i$, it holds 
	\begin{align*}
		\limsup_{T \rightarrow \infty}  \mathbb{P} \left(  p_{X \rightarrow Y}^{(i)} \le \alpha \gamma  \right) \le  \alpha \gamma ~~~\text{for all } i
	\end{align*}
	and hence
	\begin{align*}
		\limsup_{T \rightarrow \infty}  	\mathbb{P}\Big( f_{X \rightarrow Y}(\alpha \gamma) \ge \gamma  \Big)& \le \lim_{T \rightarrow \infty}   \frac{1}{\gamma} \frac{1}{N} \sum_{i=1}^N  \mathbb{P} \left(  p_{X \rightarrow Y}^{(i)} \le \alpha \gamma  \right) \\
		&\le \frac{1}{\gamma} \frac{1}{N} \sum_{i=1}^N  \alpha \gamma = \alpha 
	\end{align*}
	which completes the proof.
\end{proof}
\subsection{Proof of Theorem \ref{thm: main 2}}
\begin{proof}
	We follow the idea of the proof of \cite{Meinhausen2009} Theorem 3.2. Therefore, first note that for a uniformly distributed random variable $U$, it holds
	\begin{align}
		\sup_{\gamma \in (\gamma_{\min} , 1)}  \frac{\mathds{1}\{ U \le \alpha \gamma \}}{\gamma} = 
		\begin{split} \label{eq: split sup U}
			\begin{cases}
				0 ~~~& U \ge \alpha \\
				\alpha / U & \alpha \gamma_{\min} \le U < \alpha \\ 
				1/\gamma_{\min} & U < \alpha \gamma_{\min}
			\end{cases}
		\end{split}
	\end{align}
	
	And therefore
	\begin{align*}
		\mathbb{E}\left[ \sup_{\gamma \in (\gamma_{\min} , 1 )}   \frac{\mathds{1}\{ U \le \alpha \gamma \}}{\gamma} \right]= \alpha( 1 - \log \gamma_{\min}) . 
	\end{align*}
	Since $p_{X \rightarrow Y}^{(i)}$ is an asymptotically correct p-value, it holds that for the cdf of $p^{(i)}_{X \rightarrow Y}$ denoted by $K^{(i)}$ and the cdf of $U$ denoted by $G$ it holds that $\lim_{T\rightarrow \infty } K^{(i)}(x) \le G(x) $ for all $x$ and $i$ and therefore, for every weakly decreasing bounded function $u$ and all $i$ it holds that
	\begin{align}
		\limsup_{T \rightarrow \infty } \int u(x) ~ dK^{(i)}(x) =  \int u(x) ~ d \left(	\limsup_{T \rightarrow \infty }K^{(i)}(x) \right)  \le \int u(x) dG(x). \label{eq: mon u F G}
	\end{align}
	Seeing that for arbitrary fixed $i$, it holds
	\begin{align}
		\sup_{\gamma \in (\gamma_{\min} , 1)} \frac{\mathds{1}\{ p_{X \rightarrow Y}^{(i)} \le \alpha \gamma \}}{\gamma} =
		\begin{split} \label{eq: split p}
			\begin{cases}
				0 ~~~& p_{X \rightarrow Y}^{(i)}  \ge \alpha \\
				\alpha / p_{X \rightarrow Y}^{(i)}  & \alpha \gamma_{\min} \le p_{X \rightarrow Y}^{(i)}  < \alpha \\ 
				1/\gamma_{\min} & p_{X \rightarrow Y}^{(i)}  < \alpha \gamma_{\min},
			\end{cases}
		\end{split}
	\end{align} 
	the only difference between $\mathbb{E}\left[ \sup_{\gamma \in (\gamma_{\min} , 1 )}   \frac{\mathds{1}\{ U \le \alpha \gamma \}}{\gamma} \right]$ and $\mathbb{E}\left[ \sup_{\gamma \in (\gamma_{\min} , 1 )}   \frac{\mathds{1}\{ p_{X \rightarrow Y}^{(i)}  \le \alpha \gamma \}}{\gamma} \right]$ is in the second case in \eqref{eq: split sup U} vs \eqref{eq: split p}, but since $u(x) := \alpha / x$ is monotonically decreasing in $[\alpha \gamma_{\min}, \alpha]$ and because of \eqref{eq: mon u F G}, we obtain 
	\begin{align*}
		\limsup_{T \rightarrow \infty}\mathbb{E}\left[ \sup_{\gamma \in (\gamma_{\min} , 1 )}   \frac{\mathds{1}\{ p_{X \rightarrow Y}^{(i)}  \le \alpha \gamma \}}{\gamma} \right] \le \mathbb{E}\left[ \sup_{\gamma \in (\gamma_{\min} , 1 )}   \frac{\mathds{1}\{ U \le \alpha \gamma \}}{\gamma} \right] = \alpha (1- \log \gamma_{\min})
	\end{align*}
	Taking the mean over all panel member, we obtain
	\begin{align*}
		\limsup_{T \rightarrow \infty}	\mathbb{E}\left[ \sup_{\gamma \in (\gamma_{\min} , 1 )}   \frac{ \frac{1}{N}\sum_{i=1}^N \mathds{1}\{ p_{X \rightarrow Y}^{(i)}  \le \alpha \gamma \}}{\gamma} \right] \le (1- \log \gamma_{\min}).
	\end{align*}
	Using again the Markov inequality, we obtain
	\begin{align*}
		\limsup_{T \rightarrow \infty}	\mathbb{P}\left[ \sup_{\gamma \in (\gamma_{\min} , 1 )}   \mathds{1}\{ f_{X \rightarrow Y}(\alpha \gamma)  \ge \alpha \gamma \}\right] \le \alpha (1- \log \gamma_{\min}),
	\end{align*}
	where we use the definition of $f_{X \rightarrow Y}$ from \eqref{eq: def f}. Using \eqref{eq: equiv f q}, we obtain
	\begin{align*}
		\limsup_{T \rightarrow \infty}\mathbb{P}\left[ \inf_{\gamma \in (\gamma_{\min}, 1)} Q_{X \rightarrow Y} (\gamma) \le \alpha  \right] \le \alpha (1- \log \gamma_{\min})
	\end{align*}
	and hence
	\begin{align*}
		\limsup_{T \rightarrow \infty}\mathbb{P}\left[ \inf_{\gamma \in (\gamma_{\min}, 1)} Q_{X \rightarrow Y} (\gamma)(1- \log \gamma_{\min}) \le \alpha  \right] \le \alpha 
	\end{align*}
	which completes the proof.
\end{proof}
\section{HOW TO DEAL WITH CONFOUNDING} \label{confounding}
Since we apply Granger causality on every individual of the panel to test whether $X_i$ causes $Y_i$, we have to deal with the problem that Granger causality disregards hidden confounding. In this section, we want to give a practical solution to this problem. 
In the following, we denote by $past(t)$ the past timestamps of $t$. We now give sufficient conditions for confounding vs actual causal influence. For that, we consider two arbitrary time series $W$ and $V$ without the context of panel data. The following Proposition is closely related to \cite{Mastakouri2021} Theorem 1.b. and requires partially the same assumptions which we list bellow (note that we always assume that the present cannot causally influence the past):
\paragraph{Assumptions Appendix \ref{confounding}:}
\begin{enumerate}
	\item
	The Causal Markov condition in the full time graph holds.
	\item
	Causal Faithfulness in the full time graph
	\item
	Stationary full time graph: the full time graph is invariant under a joint time shift of all variables
\end{enumerate}
\begin{prop}
	Assume that Assumption Appendix \ref{confounding} holds, $W_{past(t)} \not\independent V_t | V_{past(t)} $,  $W_{t} \independent V_{past(t)} | W_{past(t)} $ for all $t$ and that $W$ is not causing $V$. Then, there exists a (potentially high dimensional) memoryless (i.e. it does not hold that $Z_{t-1} \rightarrow Z_t$) confounder $Z$ such that the triplet $\{W, V, Z\}$ is causally sufficient and there exists at least one $t' \in past(t)$ such that $Z_{t'} \rightarrow V_t$ and $Z_{t'} \rightarrow W_{t -n}$ for some $n > 0$ but there exists no $t'' \in past(t)$ such that $Z_{t''} \rightarrow W_t$ and $Z_{t''} \rightarrow V_{t-n}$ for some $n>0$.
\end{prop}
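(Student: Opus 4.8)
The plan is to transfer every hypothesis into the full time graph via the Causal Markov condition and Faithfulness, extract a latent common cause from a d-connecting path, bundle all such latents into one memoryless process $Z$, and then read off the directional claims. Abbreviate the hypotheses as (H1) $W_{past(t)}\not\independent V_t\mid V_{past(t)}$, (H2) $W_t\independent V_{past(t)}\mid W_{past(t)}$, and (H3) ``$W$ does not cause $V$''. First I would record the graphical consequences: by (H3) no $W_s$ is an ancestor of any $V$-variable; by the time-order convention no $V_t$ is an ancestor of $W_s$ for $s<t$; by Faithfulness, (H1) yields a path $\pi$ d-connecting some $W_s$ with $s<t$ to $V_t$ given $V_{past(t)}$, and (H2) says every path between $W_t$ and a past $V$-variable is blocked given $W_{past(t)}$. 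This is essentially the configuration treated in \cite{Mastakouri2021} Theorem 1.b, whose argument I would adapt.

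Second, I would analyse $\pi$. Its first edge cannot point out of $W_s$: following an out-edge either makes $W_s$ an ancestor of $V_t$, or runs into a collider that, to be unblocked, must be an ancestor of $V_{past(t)}$, so that again $W_s$ becomes an ancestor of a past $V$-variable --- both excluded by (H3). Hence $\pi$ enters $W_s$ through a parent and then stays a back-chain of ancestors of $W_s$ until the unique fork node $Z^\ast$ where the orientation flips; a symmetric analysis on the $V_t$ side (colliders between $Z^\ast$ and $V_t$ must likewise be ancestors of $V_{past(t)}$) shows $Z^\ast$ is a common ancestor of $W_s$ and $V_t$. Then $Z^\ast$ cannot be a $W$-variable (that contradicts (H3)), and I would rule out its being a $V$-variable using (H2) with Faithfulness; hence $Z^\ast$ is latent.

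Third, I would construct $Z$: bundle all latent variables lying on confounding paths between $W$ and $V$, restructure the latent subgraph into an equivalent memoryless form (no $Z_{t-1}\to Z_t$) by collapsing cross-time latent-only chains, and use stationarity of the full time graph to index the bundle consistently. Then $\{W,V,Z\}$ is causally sufficient, since every common cause of a $W$-variable and a $V$-variable has been absorbed into $Z$, and $Z$ is memoryless by construction. I expect performing this restructuring so that it is \emph{simultaneously} memoryless, stationary, causally sufficient, and still produces genuine arrows into $W$ and $V$ --- without disturbing the d-separation facts used above --- to be the main technical obstacle, since it requires careful bookkeeping about which latents to merge and about colliders.

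Finally, I would verify (c) and (d). For (c): collapsing the chain $W_s\leftarrow\cdots\leftarrow Z^\ast\to\cdots\to V_t$ produces a single $Z_{t'}$ with $Z_{t'}\to W_s$ and $Z_{t'}\to V_t$; since $Z_{t'}$ is a cause of $W_s$ with $s<t$, it cannot be dated at or after $t$ (the present cannot cause the past), so $t'\le s<t$, \ie\ $t'\in past(t)$, and $n:=t-s>0$ works. For (d): if some $Z_{t''}$ with $t''<t$ had $Z_{t''}\to W_t$ and $Z_{t''}\to V_{t-n}$ with $n>0$, then $W_t\leftarrow Z_{t''}\to V_{t-n}$ is a fork that is unblocked given $W_{past(t)}$ --- the middle node is a non-collider and, being a $Z$-variable, lies outside $W_{past(t)}$ --- giving $W_t\not\independent V_{t-n}\mid W_{past(t)}$ and thus $W_t\not\independent V_{past(t)}\mid W_{past(t)}$, contradicting (H2). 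These steps together establish all asserted properties of $Z$.
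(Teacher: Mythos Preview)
Your route is valid in spirit but considerably more elaborate than the paper's, and the step you flag as ``the main technical obstacle'' is one the paper simply does not face. The paper does not build $Z$ from a d-connecting path and then engineer it to be memoryless. Instead, it invokes Reichenbach's principle to assert that \emph{some} confounder exists, then defines $Z$ to be the collection of \emph{all} latent common causes of $W$ and $V$; causal sufficiency of $\{W,V,Z\}$ is then immediate. Every remaining property is obtained by contradiction rather than construction:
for claim (d) the argument is exactly yours (a fork $W_t\leftarrow Z_{t''}\to V_{t-n}$ unblocked by $W_{past(t)}$ contradicts (H2));
for claim (c), if no $Z_{t'}$ pointed into both $V_t$ and some earlier $W$-variable, then $W_{past(t)}$ and $V_t$ would be d-separated by $V_{past(t)}$, contradicting (H1);
for memorylessness, if $Z_{t-1}\to Z_t$ held, then combining the edge $Z_{t'}\to V_t$ from (c) with the memory chain and stationarity produces an unblocked path $V_t\leftarrow Z_{t'}\to Z_{t'+1}\to\cdots\to W_{s}$ with $s>t$, again contradicting (H2) at time $s$.

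So the conceptual difference is this: you try to \emph{construct} a memoryless $Z$ by collapsing latent chains, which is delicate and, as you note, requires careful bookkeeping to preserve all the d-separation facts; the paper instead takes the maximal $Z$ and shows that (H2) \emph{forces} it to be memoryless. Your path analysis (first edge into $W_s$, fork node $Z^\ast$ must be latent, etc.) is more explicit than the paper's rather informal appeal to Reichenbach, and would be useful if one wanted a fully rigorous version --- but for the proposition as stated, the contradiction argument is both shorter and avoids your obstacle entirely.
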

\begin{proof}
	Assume that $W$ is not causing $V$, then according to Reichenbachs principle of common causes, there exists a confounder $Z$ between $W$ and $V$. Further, without loss of generality, we can assume causal sufficiency for the triplet $\{W, V , Z\}$ since, if another confounder exists we include it to the (potentially high dimensional) confounder $Z$. It then follows, that there exists no $t'' \in past(t)$ such that $Z_{t''} \rightarrow W_t$ and $Z_{t''} \rightarrow V_{t-n}$ for some $n>0$ because otherwise $W_{t} \not\independent V_{past(t)} | W_{past(t)} $ which is a contradiction to the assumption.
	Further, there exists at least one $t' \in past(t)$ such that $Z_{t'} \rightarrow V_t$ and $Z_{t'} \rightarrow W_{t -n}$ for some $n > 0$ because otherwise  $W_{past(t)} \independent V_t | V_{past(t)} $ which is a contradiction to the assumption.
	\\
	It remains to show that $Z$ has no memory effect. If $Z$ would have a memory effect, then \[V_t \leftarrow Z_{past(t)} \rightarrow Z_t \rightarrow Z_{t+1} \rightarrow W_{t+1}\] and hence  $W_{t+1} \not\independent V_{past(t+1)} | W_{past(t+1£)} $ which is a contradiction to the assumption.
\end{proof}
In particular, this proposition shows that if we observe $W$ causing $V$ but not $V$ causing $W$, then this can only be due to hidden confounding if the confounder has the following structure:  $Z$ has to causally influence the present of $V$ with larger time delay than the present of $W$. Further, $Z$ cannot have a memory effect.  \\
Hence, a practical way of dealing with hidden confounding is to exclude bi-directional influence between $W$ and $V$. In that way, we might decrease the true positive rate but also remove false positives due to hidden confounding. \\
Since our QPPA approach relies on Granger causality for the individual panel members, the same also holds for QPPA. \\

Here, we want to mention that \cite{Peters2017} argue after Figure 10.7.(a) that such a confounder structure occurs in real data using the example of price of butter and cheese. They state that the price of butter and cheese are confounded by the price of milk but the influence from milk to cheese has a larger time delay than from milk to butter because it takes longer to produce cheese. \\ However, this is only an example for such a confounder structure if the price of milk has no memory effect, otherwise we observe significant influence (of course depending on the strength of influence and memory effect) from the past of cheese to the present of butter \textbf{and} from the past of butter to the present of cheese. 
\section{DETECTING INSTANTANEOUS CAUSAL INFLUENCES} \label{instantenous effects}
According to the dynamics \eqref{eq:gen-proc}, instantaneous effects are excluded. However, as \cite{Peters2017} Chapter 10.3 argue that, under some assumptions, Granger causality is able to detect causal influence even if it is purely instantaneous. Namely, if faithfulness holds and time series $X$ has a memory effect (note that here we specifically consider the bi-variate case, in multi variate cases, instantaneous effects can lead to non-identifiability), \textit{i.e.}, $X_{t-1} \rightarrow X_t$ for all $t$, then if $X$ causes $Y$, the present of $Y$ is not independent of the past of $X$ given the past of $Y$ because of the causal influence $X_{t-1}  \rightarrow X_t \rightarrow Y_t$. The following figure, taken from \cite{Peters2017}  illustrates this dependence.
\begin{figure}[ht]
	\centering
	\includegraphics[scale=0.6]{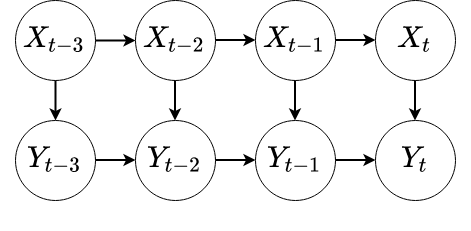}
	\caption{\textit{Example of a causal relation with only instantaneous influences taken from \cite{Peters2017} Figure 10.8.(b) where Granger causality is able to detect the causal relation from $X$ to $Y$. In this example, $X$ and $Y$ denote single time series, contrary to our panel data notation where $X$, $Y$ denote panels.} } \label{fig: inst effect}
\end{figure}
Hence, including the past of $X$ into the prediction of $Y$ decreases the prediction error and therefore Granger causality could be detected. Since our panel Granger Non-causality test relies on Granger causality on every individual panel member, our algorithm is thus capable of detecting causal influence even if it is purely instantaneous. This is shown in the following experiment.
\paragraph{Experiment for instantaneous effects:}  The generation process with only instantaneous causal effects (which is, as the processes in Section \ref{sec: experiments}, an autoregressive process of order 1) is specified by:
\begin{align*}
	X_{i,t} &= \delta_{i, 1} X_{i, t-1} + \eta_{i, t}, \\
	Y_{i, t} &= \theta_{i, 1} Y_{i, t-1} + \beta X_{i, t} +  \epsilon_{i, t},
\end{align*}
where the innovation processes are i.i.d. Gaussian random variables with $\eta_{i, t}, \epsilon_{i, t} \sim N(0, 0.1)$ and i.i.d. $\delta_{i, 1}, \theta_{i, 1} \sim Unif(0.2, 0.8)$. Further, we either draw $\beta $ from $Unif(0.2, 0.8)$ if the null should be rejected or set $\beta = 0$ if the null should be accepted. The instantaneous effect comes from the influence $ \beta X_{i, t}$ on $Y_{i, t}$ and this generation process results in the causal structure corresponding to Figure \ref{fig: inst effect}. Since the dependence of $X_{t-1}$ and $Y_t$ given $Y_{t-1}$ is indirect through the memory effect and the instantaneous effect, we need a stronger signal to detect the relation via QPPA, therefore we also include results of QPPA where we draw $\beta$ from $Unif(0.6, 0.8)$.
The results in Figure \ref{fig: inst diff gamma} show that for $T=50$, the power decreases relatively fast. Also, for $T=100$, the power is smaller than for causal influence with time delay. However, we also see in in Figure \ref{fig: inst diff gamma}.(d) that QPPA is still able to recover the true causal relation in most cases for $\gamma$ between $0.01$ and approx.~$0.5$ even if the influence is purely instantaneous.
\begin{figure}[ht]
	\begin{subfigure}{.5\textwidth}
		\centering
		\includegraphics[scale=0.4]{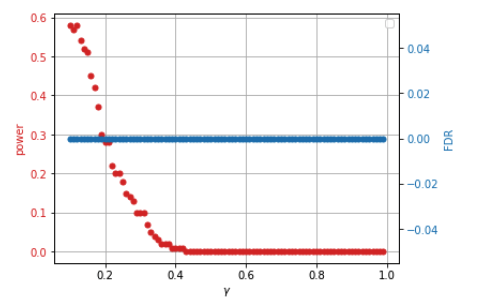} \label{fig:a covid diff gamma 30}
		\caption{$T=50, ~\beta \sim Unif(0.2, 0.8)$}
	\end{subfigure}
	\begin{subfigure}{.5\textwidth}
		\centering
		\includegraphics[scale=0.4]{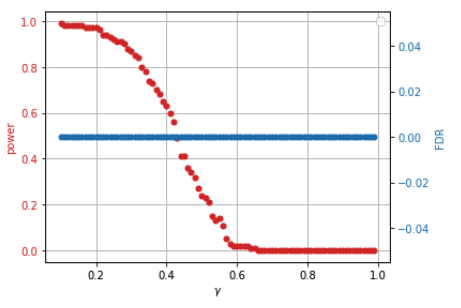} \label{fig:a covid diff gamma 60}
		\caption{$T=100, ~ \beta \sim Unif(0.2, 0.8$}
	\end{subfigure}
	\begin{subfigure}{.5\textwidth}
		\centering
		\includegraphics[scale=0.4]{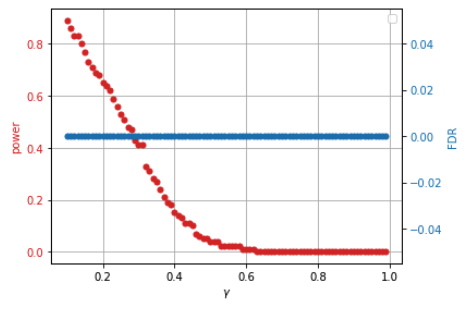} \label{fig:a covid diff gamma 30}
		\caption{$T=50, ~ \beta \sim Unif(0.6, 0.8)$}
	\end{subfigure}
	\begin{subfigure}{.5\textwidth}
		\centering
		\includegraphics[scale=0.4]{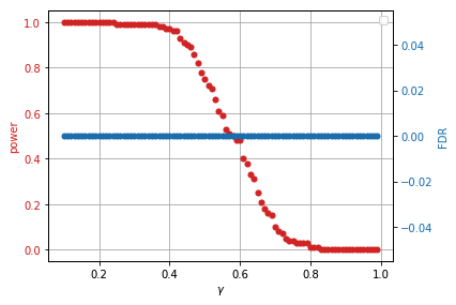} \label{fig:a covid diff gamma 60}
		\caption{$T=100, ~ \beta \sim Unif(0.6, 0.8)$}
	\end{subfigure}
	\caption{\textit{Empirical results of Experiment for instantaneous effects for QPPA with $\gamma = 0.01, 0.02, \dots , 0.99$. We choose $N=30$ and $T=50, 100$.}} \label{fig: inst diff gamma}
\end{figure}
\section{DEALING WITH NON-STATIONARITIES} \label{non-stationary}
A common way to deal with non-stationarities to test for Granger Non-causality is to difference the processes. More precisely, following \cite{Granger81, Engle87}, we say that a time series $W$ is integrated of order $d$, denoted by $I(d)$, if  $\{(1-L)^d W_t\}_t$ is (covariance) stationary, where $L$ denotes the lag operator. 
Different test procedures can be applied to find the order of integration, we examine one option for the COVID-19 study in Appendix 
\ref{sec: covid preprocessing}.  \\
If we want to test the existence of Granger causality between two time series $W$ and $V$, we can first search for the order of integration using stationarity tests on the $n$-th difference process where we stop as soon as stationarity gets accepted. If $W$ and $V$ have different orders of integration, say $d_W$ and $d_V$, we take the maximum and difference both time series $\max(d_W, d_V)$-times. The difference processes often lead to stationary time series in practice. However, note that such a $d$ does not necessarily exist and hence this approach is not applicable for all datasets. 
\section{FURTHER EXPERIMENTS WITH SYNTHETIC DATA} \label{sec: further exp synth}
\paragraph{Experiment for sporadically missing connections:} For this experiment we again consider the model without cross-sectional dependencies:
\begin{align*}
	X_{i,t} &= \delta_{i, 1} X_{i, t-1} + \eta_{i, t}, \\
	Y_{i, t} &= \theta_{i, 1} Y_{i, t-1} + \beta X_{i, t-1} +  \epsilon_{i, t},
\end{align*} 
where the innovation processes are i.i.d. Gaussian random variables with $\eta_{i, t}, \epsilon_{i, t} \sim N(0, 0.1)$ and i.i.d. $\delta_{i, 1}, \theta_{i, 1} \sim Unif(0.2, 0.8)$. Further, we either draw $\beta $ from $Unif(0.2, 0.8)$ if the null should be rejected or set $\beta = 0$ if the null should be accepted. To show robustness against sporadically missing connections, we set $\beta$ to zero with probability $a$ in the case where the null should be rejected where we let $a$ range from $0.1$ to $0.9$. The proportion of missing connections when the null should be rejected is $a$, \textit{i.e.}, for every panel member there is the chance of $a$ that $\beta = 0$ although the null should be rejected.
\begin{figure}[ht]
	\begin{subfigure}{.5\textwidth}
		\centering
		\includegraphics[scale=0.4]{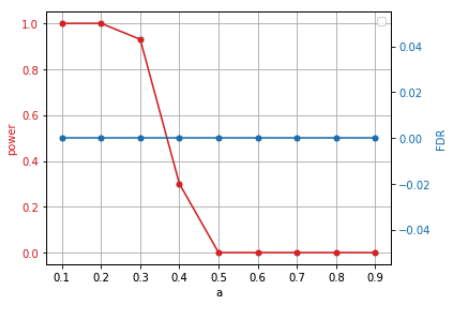} \label{fig:a covid diff gamma 30}
		\caption{$T=50$}
	\end{subfigure}
	\begin{subfigure}{.5\textwidth}
		\centering
		\includegraphics[scale=0.4]{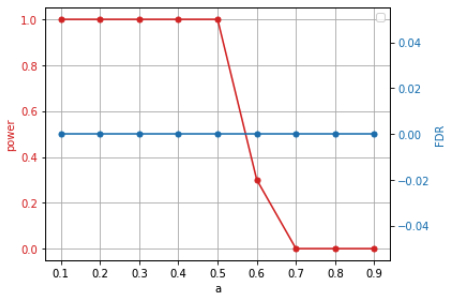} \label{fig:a covid diff gamma 60}
		\caption{$T=100$}
	\end{subfigure}
	\caption{\textit{Empirical results of Experiment for sporadically missing edges for QPPA, where we randomly set $\beta$ to $0$ in the case where the null should be rejected with probability $a$ where we let $a$ range from $0.1$ to $0.9$. We choose $N=100$ and $T=50, 100$ and $\gamma = 0.5$.}} \label{fig: spor miss diff gamma}
\end{figure}
The results in Figure \ref{fig: spor miss diff gamma} show that especially in the large sample regime, QPPA is robust against sporadically missing edges since the power drops only after the probability that an edges is missing although it should be there is higher than $50\%$.
\paragraph{Experiment non-Gaussian noise:} Again, we consider the model without cross-section dependencies:
\begin{align*}
	X_{i,t} &= \delta_{i, 1} X_{i, t-1} + \eta_{i, t}, \\
	Y_{i, t} &= \theta_{i, 1} Y_{i, t-1} + \beta X_{i, t-1} +  \epsilon_{i, t},
\end{align*} 
where the innovation processes are i.i.d. \textbf{uniformly distributed} random variables with $\eta_{i, t}, \epsilon_{i, t} \sim Unif(0.2, 0.8)$ and i.i.d. $\delta_{i, 1}, \theta_{i, 1} \sim Unif(0.2, 0.8)$ and $\beta \sim Unif(0.2, 0.8)$ if the null should be rejected and $\beta = 0$ if the null should be accepted.
\begin{figure}[ht]
	\begin{subfigure}{.5\textwidth}
		\centering
		\includegraphics[scale=0.37]{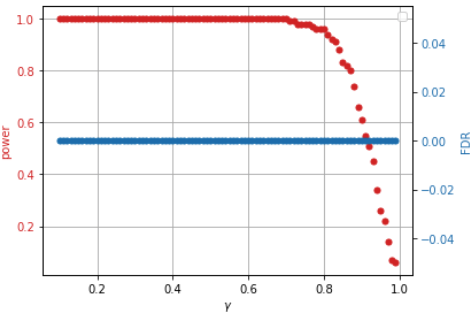} \label{fig:a covid diff gamma 30}
		\caption{$T=50$}
	\end{subfigure}
	\begin{subfigure}{.5\textwidth}
		\centering
		\includegraphics[scale=0.4]{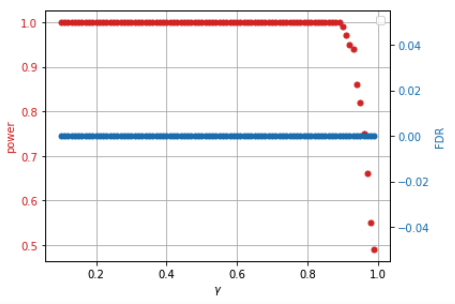} \label{fig:a covid diff gamma 60}
		\caption{$T=100$}
	\end{subfigure}
	\caption{\textit{Empirical results of experiment for non-Gaussian noise effects for QPPA with $\gamma = 0.01, 0.02, \dots , 0.99$. We choose $N=30$ and $T=50, 100$.}} \label{fig: non-gauss diff gamma}
\end{figure}
The results in Figure \ref{fig: non-gauss diff gamma} show that non-Gaussian noise does not decrease the power/FDR of QPPA, \textit{cf.} Figure \ref{fig: varying gamma exp 1}. 
\paragraph{Experiment non i.i.d noise:} Also for this experiment we use the model without cross-sectional dependencies:
\begin{align*}
	X_{i,t} &= \delta_{i, 1} X_{i, t-1} + \eta_{i, t}, \\
	Y_{i, t} &= \theta_{i, 1} Y_{i, t-1} + \beta X_{i, t-1} +  \epsilon_{i, t},
\end{align*} 
where we sample the innovation processes from 3-blocks, \textit{i.e.}, $\eta_{i, 1}, \eta_{i, 2}, \eta_{i, 3}$ are dependent, $\eta_{i, 4}, \eta_{i, 5}, \eta_{i, 6}$ are dependent etc. and similar for $\epsilon_{i, t}$. Further, $\delta_{i, 1}, \theta_{i, 1} $ are  i.i.d. $Unif(0.2, 0.8)$

\begin{figure}[ht]
	\begin{subfigure}{.5\textwidth}
		\centering
		\includegraphics[scale=0.3]{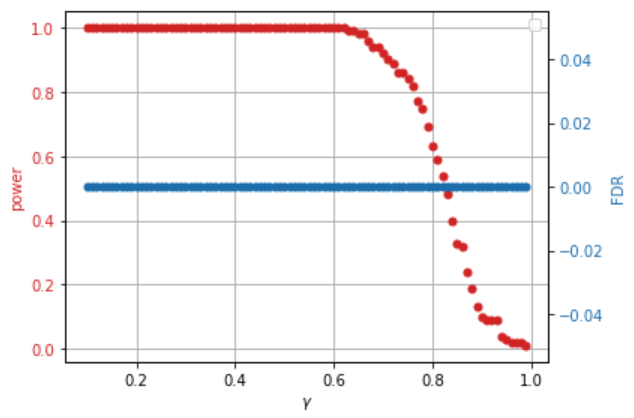} \label{fig:a covid diff gamma 30}
		\caption{$T=50$}
	\end{subfigure}
	\begin{subfigure}{.5\textwidth}
		\centering
		\includegraphics[scale=0.3]{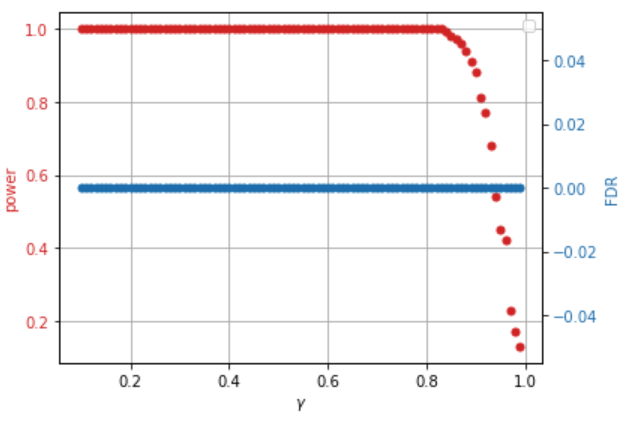} \label{fig:a covid diff gamma 60}
		\caption{$T=100$}
	\end{subfigure}
	\caption{\textit{Empirical results of Experiment for non-Gaussian noise effects for QPPA with $\gamma = 0.01, 0.02, \dots , 0.99$. We choose $N=30$ and $T=50, 100$.}} \label{fig: dependent noise diff gamma}
\end{figure}
In Figure \ref{fig: dependent noise diff gamma}, we see similar results as in Figure \ref{fig: non-gauss diff gamma}. It shows that QPPA is robust against this type of dependent residuals.
\section{ADDITIONAL RESULTS TO COVID STUDY}
\subsection{COVID-19 Data Preprocessing Steps} \label{sec: covid preprocessing}
The preprocessing steps we apply on the COVID-19 data about confirmed cases and deaths are as follows: 
\begin{enumerate}
	\item We remove all members of the panel (\textit{i.e.}, Countries/Regions) which have at least one missing value in confirmed cases or deaths since neither the Granger causality implementation of \texttt{statsmodel} nor \texttt{xtg-cause} can deal with missing values. However, note that QPPA could deal with missing values here, it would only require an implementation of Granger causality that can deal with missing values.  
	\item Although the dataset contains records starting from 22nd January 2020, we only consider the time range 1st November 2020 until 4th October 2021 because many countries did not have confirmed COVID-19 cases (and therefore also no death cases) in the beginning of the pandemic. 
	\item We standardize the data.  
	\item Since Granger causality can only deal with stationary time series, we first apply a stationarity test for panel data that we developed based on the augmented Dickey-Fuller test (adf test) \citep{dickey1979}, combined with the p-value aggregation method explained in Section \ref{sec::p-values for high-dimensional regression} to test for unit root for each member of the panel. We include more explanation and the test results in Appendix \ref{sec: panel adf}.
	If the null gets accepted to the significance level of 5\% (where the null is that there exists a unit root, \textit{i.e.}, the time series are non-stationary), we generate the first difference of the panel members and apply the test again on the difference process. If the test again accepted the null (again to the significance level of 5\%), we generate the second order difference processes. We continue this procedure until the null gets rejected. The results are also given in Appendix \ref{sec: panel adf}. Our test procedure rejects the null for the second order difference process which we will use from now on for the following analysis. 
	\item From the panel we remove those members for which either the stationary version of confirmed cases or deaths is constant since causality cannot be conducted from constant time series. 
\end{enumerate}
\subsection{Cross-Sectional Dependence Test} \label{sec: covid panel dependence}
To test for cross-sectional dependence, we use the \texttt{Stata}-implementation \citep{Dehoyos06} which contains the test procedures of \cite{Pesaran04, Frees95, Friedman37}. There, we apply every test on the preprocessed data according to Appendix \ref{sec: covid preprocessing}.
The null of these tests is that the individual panel members are independent. For every test, we obtain the p-value 0.000 and therefore all three tests strongly reject the null. Hence, we accept the alternative that cross-sectional dependencies exist.
\subsection{Non-Stationarity Test} \label{sec: panel adf}
There are several existing non-stationarity tests for panel data, see \cite{Breitung00, Breitung05, Choi01, Hadri00, Harris99, Im03, Levin02}. However, most of them assume independence across panel members. Here, we want to present another panel stationarity test that relies on the same procedure as our QPPA approach, Section \ref{sec:panel_boot}, except that in step 1 we instead of applying Granger causality on the individual panel members, we apply a stationarity test and then aggregate the corresponding p-values with the procedure described in step 2. For the stationarity test, we use the augmented Dickey-Fuller test (adf), see \cite{dickey1979}. Hence, we apply the adf test on each individual panel member on the COVID-19 data about confirmed cases and deaths, where we use 12 time lags.
Afterwards, we aggregate these p-values using the aggregation method 
\begin{align*}
	\min\{1, \text{emp. } \gamma \text{-quantile}  (p_j / \gamma : j \in 1, \dots , N) \},
\end{align*}
where $p_j$ denotes the p-value of the adf test for the $j$-th panel member. Our hypothesis test reads
\begin{align*}
	H_0&: \text{The panel has a unit root} \\
	H_1&: \text{The panel has no unit root}.
\end{align*}
Similar to the hypotheses test of QPPA, we assume that either each member of the panel has a unit root or none of them. \\
To find the order of integration (see Section \ref{non-stationary}), we apply the adf test combined with the second step of QPPA.
Table \ref{fig: adf test} show the results  for different $\gamma$. We see consistent rejection of the null for the second order difference processes, whereas for the first order we only reject for $\gamma = 0.1$ and without generating the difference process, the null is not rejected for any $\gamma \in \{0.1, 0.25, 0.5, 0.75, 0.9\}$. Hence, we have clear indication that the COVID-19 data is second difference order stationary. Since we need stationary data for Granger causality, we will henceforth use the second order processes for the analysis.
\begin{table} [ht]
		\caption{\textit{Empirical results for the stationarity test of COVID-19 using adf combined with the second step of QPPA. With a slight abuse of notation, we say that the panel is $d$-order integrated (denoted by $I(d)$) if the null of our panel adf test is rejected for the panel $\{\{(1-L)^d X_{i, t}\}_t\}_i$ and $\{\{(1-L)^d Y_{i, t}\}_t\}_i$ respectively where $X$ denotes confirmed cases and $Y$ denotes deaths, where $L$ denotes the back shift operator. For more details see Section \ref{non-stationary}.}} \label{fig: adf test}
	\centering
		\begin{tabular}{clll}
			\toprule
			Order of  &  $\gamma$    &  \multicolumn{2}{c}{p-val} \\
			integration		&      & confirmed cases &      deaths \\
			\midrule
			$I(0)$ & 0.1 &  0.414 & 1.0 \\
			& 0.25 & 1.0  &      1.0 \\
			& 0.5 & 1.0  &  1.0 \\
			& 0.75 &  1.0 &  1.0  \\
			& 0.9 &  1.0 &  1.0     \\
			$I(1)$ & 0.1 &  0.001 & 0.003  \\
			& 0.25 &    0.337  & 0.098   \\
			& 0.5 &  0.569 &  0.426 \\
			& 0.75 &  0.745 &  0.641  \\
			& 0.9 &0.957    &  0.843     \\
			$I(2)$ & 0.1 &9.883e-14 &  1.877e-12  \\
			& 0.25 & 2.297e-12  & 2.457e-10       \\
			& 0.5 & 2.543e-08  &  1.702e-06 \\
			& 0.75&5.574e-05    &  0.001  \\
			& 0.9&0.004    & 0.006     \\
			
			\bottomrule
		\end{tabular}
\end{table}
\newpage
\subsection{Additional COVID-19 Experiments} \label{sec: additional covid exp}
We repeat the experiment in Figure \ref{fig: cov diff gamma} with different number of countries/regions. The results, given in Figure \ref{fig:appendix varying gamma covid}, show that the accuracy of QPPA increases with increasing number of countries/regions per run.
\begin{figure}[h]
	\begin{subfigure}{.33\textwidth}
		\includegraphics[scale=0.3]{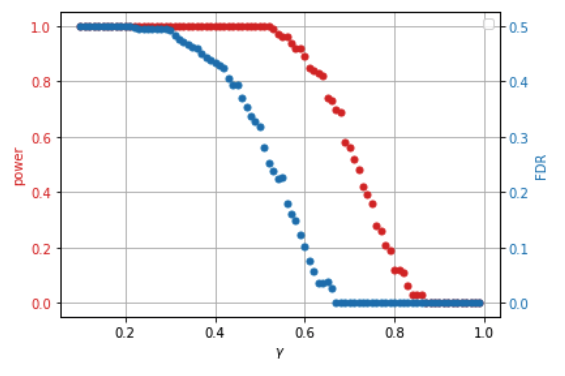} \label{fig:a covid diff gamma 30}
		\caption{$N=30$}
	\end{subfigure}
	\begin{subfigure}{.33\textwidth}
		\includegraphics[scale=0.3]{images/covid_diff_gamma_60.png} \label{fig:a covid diff gamma 60}
		\caption{$N=60$, also used in the main text}
	\end{subfigure}
	\begin{subfigure}{.33\textwidth}
		\includegraphics[scale=0.3]{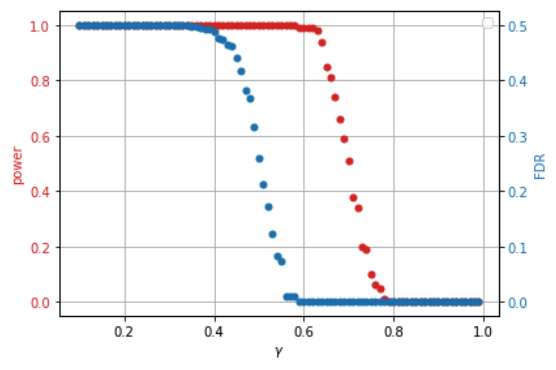} \label{fig:a covid diff gamma 100}
		\caption{$N=100$}
	\end{subfigure}
	\caption{Empirical results for COVID-19 data about confirmed cases and deaths using QPPA with $\gamma$ = 0.01, 0.02, . . . , 0.99. To calculate power and FDR, we randomly selected $N$ countries/regions out of the 225 (here, $N$ is specified in the subfigures) and checked whether QPPA detects the causal relation c → d and d → c respectively to the significance level 5\% and repeat this 100 times. Here, we use the complete time series length available which is $T=335$.
	} \label{fig:appendix varying gamma covid}
\end{figure}
\end{document}